\documentclass[journal]{IEEEtran}

\usepackage[nocompress]{cite}
\usepackage{graphicx}
\usepackage{amsmath,amssymb,amsfonts}
\usepackage{algorithmic}

\usepackage{booktabs,tabularx,array,makecell,multirow}
\usepackage{pifont} 
\usepackage{threeparttable} 
\newcommand{\cmark}{\ding{51}} 
\newcommand{\xmark}{\ding{55}} 

\usepackage[colorlinks=true, linkcolor=blue, citecolor=blue, urlcolor=blue]{hyperref}

\usepackage{algorithm}
\usepackage{amsthm} 

\newtheorem{lemma}{Lemma}
\newtheorem{proposition}{Proposition}

\usepackage{etoolbox}
\AtBeginEnvironment{figure}{\setlength{\abovecaptionskip}{-5pt}}

\def\E{\mathbb{E}}
\def\R{{\mathbb{R}}}
\def\C{{\mathbb{C}}}
\def\CN{\mathcal{CN}}

\def\cordis{CORDIS}

\def\Nap{N_{ap}}
\def\Nue{N_{ue}}
\def\Nt{N_{t}}
\def\Mt{M_{t}}
\def\Mr{M_{r}}
\def\setA{\mathcal{A}}
\def\setAt{\mathcal{A}_t}
\def\setAr{\mathcal{A}_r}
\def\setU{\mathcal{U}}
\def\setT{\mathcal{T}}
\def\setD{\mathcal{D}}

\def\Pmax{P_\text{max}}
\def\los{LoS}
\def\nlos{NLoS}

\DeclareMathOperator{\tr}{tr}
\DeclareMathOperator{\vect}{vec}
\DeclareMathOperator{\cov}{cov}
\DeclareMathOperator{\mat}{mat}
\DeclareMathOperator{\col}{col}

\begin{document}
\bstctlcite{BSTcontrol}

\title{CORDIS: A Scalable Coordinated Resource Allocation \\ Framework for Distributed Cell-Free ISAC
}


\author{Mehdi~Zafari,~\IEEEmembership{Graduate Student Member,~IEEE,}
        Bj\"{o}rn~Ottersten,~\IEEEmembership{Fellow,~IEEE,}
        \\ and~A.~Lee~Swindlehurst,~\IEEEmembership{Life Fellow,~IEEE}

\thanks{Mehdi Zafari and A. Lee Swindlehurst are with the Department of Electrical Engineering and Computer Science, University of California, Irvine (UCI), CA, USA (e-mail:~\{mzafarid, swindle\}@uci.edu).}

\thanks{Bj\"{o}rn Ottersten is with the University of Luxembourg and the KTH Royal Institute of Technology, Stockholm, Sweden (e-mail: otterste@kth.se).}
}

\maketitle


\begin{abstract}

Integrated Sensing and Communication (ISAC) is envisioned as a key technology for 6G wireless networks, enabling the joint use of spectrum and hardware for sensing and communication.
In multi-static cell-free architectures, coordinating beamforming and power resources across distributed access points (APs) is critical in order to mitigate severe communication-sensing interference.
Most existing ISAC resource allocation solutions rely
on centralized architectures with full network knowledge, which limits their scalability and practicality in distributed cell-free deployments with imperfect channel state information (CSI).
In this paper, we introduce a framework for \textit{COordinated Resource allocation for Distributed ISAC Systems} (\cordis) that optimizes network sensing while suppressing clutter and maintaining per-user communication performance constraints.
Two algorithms are developed: \cordis-Split, a low-overhead scheme that pairs fixed local beamformers with centralized power allocation, and \cordis-ADMM, which jointly optimizes beamforming and power through the consensus Alternating Direction Method of Multipliers (ADMM).
By localizing high-dimensional matrix operations, both algorithms ensure fronthaul overhead and per-AP computation remain independent of antenna and AP counts.
Simulations demonstrate that \cordis-ADMM approaches the centralized performance bound and degrades gracefully under CSI estimation error, remaining effective even with locally rank-deficient channels, while \cordis-Split offers a minimal-overhead alternative.
These results confirm that \cordis\ is a scalable and communication-efficient foundation for robust ISAC in decentralized wireless networks.

\end{abstract}

\begin{IEEEkeywords}
CORDIS, integrated sensing and communication (ISAC), cell-free massive MIMO, resource allocation, distributed optimization, ADMM, imperfect CSI, spatial clutter
\end{IEEEkeywords}


\section{Introduction}
\label{sec:intro}

\IEEEPARstart{T}{he} sixth generation (6G) of wireless networks is envisioned to transcend the boundaries of traditional connectivity by embedding native perception capabilities into the communication infrastructure.
Integrated Sensing and Communication (ISAC)~\cite{Liu2022dualFunctionalISAC, Wymeersch2021ISAC6G, Liu2026multiDomainOpt}, also known as joint communication and sensing, has emerged as a cornerstone of this vision, officially recognized by the ITU-R IMT-2030 framework~\cite{ITU_R_M2160_2023}, ETSI GR ISC~\cite{ETSI_GR_ISC_001_V111_2025}, and 3GPP standardization efforts~\cite{3GPP_TR_22837_V1940_2024} as a critical usage scenario.
By unifying radar sensing and data transmission onto a shared hardware platform and spectral footprint, ISAC promises to address spectrum scarcity while enabling transformative applications such as autonomous transportation, surveillance, immersive digital twins, and smart manufacturing. However, the deployment of ISAC within legacy cellular architectures faces fundamental physical limitations; rigid cell boundaries and reliance on monostatic sensing geometries often result in severe inter-cell interference and sensing blind spots that compromise the reliability required for safety-critical 6G applications.

In parallel, cell-free massive MIMO has been widely advocated as a post-cellular architecture for 6G, leveraging a dense deployment of distributed access points (APs) that coherently serve users without cell boundaries.
Compared to conventional cellular and small-cell deployments, cell-free networks can provide more uniform user experience through macro-diversity and interference suppression, while naturally supporting scalable implementations via user-centric clustering and local processing \cite{Ngo2017cellfreeVsSmallcell, Interdonato2019ubiquitousCF, Demir2021foundationsCF, Bjornson2020ScalableCFmMIMO}.
From an ISAC perspective, the geographically distributed APs offer an additional advantage: they can act as spatially separated illuminators and/or receivers, enabling multistatic sensing diversity and improved geometric observability, which is particularly attractive for sensing tasks that benefit from multiple spatial views.

Despite these advantages, realizing ISAC over a distributed cell-free architecture raises a central systems challenge: \emph{how to coordinate spatial beamforming and power resources across many APs so as to jointly ensure multi-user communication quality of service (QoS) and sensing performance, while respecting fronthaul capacity and distributed computational limits.}
This coupling is inherently nontrivial because sensing and communication compete for the same spatial degrees of freedom and power budgets, and because coordination must scale with the number of APs without requiring the exchange of high-dimensional channel state information (CSI) or beamforming vectors.
While recent work on scalable cell-free ISAC has made important progress through practical signal processing and sensing architectures, a complementary need is a \emph{system-level resource allocation abstraction} that makes the communication-sensing tradeoffs explicit and enables principled coordination under limited information exchange.
Motivated by this need, we pursue a scalable coordination framework that accommodates both partially centralized and fully decentralized implementations by optimizing spatial beamforming and power across fixed time-frequency resources under practical conditions of imperfect CSI and clutter, leveraging our preliminary developments as a starting point~\cite{Zafari2025confAsilomar}.

\begin{table*}[t]
\centering
\caption{Comparison of Related Works in ISAC Resource Allocation}
\label{tab:related_works}
\renewcommand{\arraystretch}{1.2} 
\begin{tabular}{l c c c c c c c c c c}
\toprule
\textbf{Ref.} & \textbf{Arch.} & \textbf{Channel} & \makecell[c]{\textbf{Sensing}\\\textbf{Metric}} & \textbf{Optimization} &
\makecell[c]{\textbf{Fronthaul}\\\textbf{Load}} & 
\makecell[c]{\textbf{Imp.}\\\textbf{CSI}} &
\makecell[c]{\textbf{Multi-}\\\textbf{static}} &
\textbf{Clutter} &
\makecell[c]{\textbf{Fronth.}\\\textbf{Limits}} &
\makecell[c]{\textbf{Local}\\\textbf{Proc.}} \\
\midrule

\makecell[c]{\cite{Femenias2025ScalableCFmMIMO},\\\cite{Femenias2025scalableISAC_HWI}} &  \makecell[c]{Scalable\\Centralized\\Cell-Free} & \makecell[c]{Correlated\\Rician fading} & \makecell[c]{Detection\\probability\\with MAPRT} & \makecell[c]{Heuristic\\Fractional PA} & \makecell[c]{Global CSI +\\PA weights\\(cmplx scalars)} & \cmark & \cmark & \cmark & \xmark & \xmark\\

\hline

\cite{Mao2024CSregion} & \makecell[c]{Centralized\\Cell-Free} & \makecell[c]{Quasi-static\\blk fading\\Rayleigh/A-LoS} & \makecell[c]{Beampattern\\matching\\MSE} & \makecell[c]{BCD-SCA for\\BF design} & \makecell[c]{Global CSI +\\BF weights\\(cmplx scalars)} & \cmark & \cmark & \xmark & \xmark & \xmark\\

\hline

\cite{Elfiatoure2025multiTargetCF} & \makecell[c]{Distributed\\Cell-Free} & \makecell[c]{Quasi-static\\blk fading\\Rayleigh/A-LoS} & \makecell[c]{Beampattern\\MASR} & \makecell[c]{SCA CVX for\\AP mode and PA} & \makecell[c]{AP mode +\\PA weights\\(real scalars)} & \cmark & \xmark & \xmark & \xmark & \cmark \\

\hline

\cite{Meng2025CoopIsacNet} & \makecell[c]{Cooperative\\Multi-cell\\with CoMP} & \makecell[c]{Uncorrelated\\Rayleigh/A-LoS} & \makecell[c]{Approximate\\CRLB} & \makecell[c]{Cooperative\\cluster size / PA} & \makecell[c]{Global CSI +\\cluster coord.\\(cmplx scalars)} & \xmark & \cmark & \xmark & \cmark & \xmark \\

\hline

\cite{Behdad2024multistatic} & \makecell[c]{Centralized\\Cell-Free\\w/ C-RAN} & \makecell[c]{Blk fading\\corr. Rayleigh\\A-LoS sensing} & \makecell[c]{Detection\\probability\\and SINR} & \makecell[c]{CCP-based PA\\to maximize\\sensing SINR} & \makecell[c]{Global CSI +\\BF/PA weights\\(cmplx scalars)} & \cmark & \cmark & \cmark & \xmark & \xmark \\

\hline

\cite{Huang2022CoordPowerControl} & \makecell[c]{Centralized\\cooperative\\multi-point} & \makecell[c]{Uncorrelated\\Rician fading} & \makecell[c]{CRLB for\\localization} & \makecell[c]{SDR to minimize\\total TX power} & \makecell[c]{Global CSI +\\PA weights\\(cmplx scalars)} & \xmark & \cmark & \xmark & \xmark & \xmark \\

\hline

\cite{Demirhan2025CellFreeISAC} & \makecell[c]{Centralized\\Cell-Free} & \makecell[c]{Blk fading\\with A-LoS +\\point reflector} & \makecell[c]{Sensing SNR} & \makecell[c]{SDP for joint\\comm-sensing\\BF design} & \makecell[c]{Global CSI +\\BF/PA weights\\(cmplx scalars)} & \xmark & \cmark & \xmark & \xmark & \xmark \\

\hline

\cite{Zafari2025confAsilomar} & \makecell[c]{Distributed\\Cell-Free} & \makecell[c]{Uncorrelated\\Ricean fading} & Sensing SNR & \makecell[c]{ADMM for\\BF \&  PA} & \makecell[c]{Interferences +\\consensus\\(real scalars)} & \xmark & \cmark & \xmark & \cmark & \cmark \\

\hline

\textbf{Prop.} & \makecell[c]{\textbf{Distributed}\\\textbf{Cell-Free}} & \makecell[c]{\textbf{Blk fading}\\\textbf{Correlated}\\\textbf{Rician}} & \makecell[c]{\textbf{STAP-based}\\\textbf{SCNR}} & \makecell[c]{\textbf{Consensus}\\\textbf{ADMM}} & \makecell[c]{\textbf{Interferences +}\\\textbf{consensus}\\\textbf{(cmplx scalars)}} & \cmark & \cmark & \cmark & \cmark & \cmark \\
\bottomrule

\multicolumn{11}{@{}p{\textwidth}@{}}{\footnotesize \emph{Note:}
“\cmark” indicates the feature is supported and “\xmark” otherwise.
HWI: Hardware Impairment, BCD: Block Coordinate Decent,  SCA: Successive Convex Approximation, CVX: Convex Optimization, MASR: Mainlobe-to-Average Sensing Ratio, CCP: Concave-Convex Programming, A-LoS: All targets in LoS.
} 

\end{tabular}

\vspace{-2mm}
\end{table*}

\subsection{Background and Related Work}
\label{subsec:related-work}

A first line of work pursues scalable transmit processing and detection.
In \cite{Femenias2025ScalableCFmMIMO}, a scalable cell-free approach is developed 
with user- and target-centric AP clustering, minimum mean-squared error (MMSE) 
precoding, which was later extended to include transceiver hardware
impairments~\cite{Femenias2025scalableISAC_HWI}. Alternatively, \cite{Mao2024CSregion} 
characterizes the communication-sensing (C-S) region
using beampattern-matching and designs a robust beamformer that accounts
for channel estimation and target location uncertainty. These efforts provide
valuable processing and robustness foundations, but both are centralized and
obtain scalability from clustering and centralized processing rather
than from a distributed optimization of beamforming and power under explicit
fronthaul constraints.
 
A second line couples a sensing objective with beamforming and power
optimization. In a C-RAN-assisted cell-free setting, \cite{Behdad2024multistatic} 
combines regularized zero-forcing and a dedicated
sensing beam with a clutter-aware detector and a concave-convex power
allocation that maximizes sensing signal-to-interference-plus-noise ratio (SINR) 
under per-user SINR and per-AP power constraints. A joint multi-static sensing SNR 
and beamforming design with a subsequent power allocation step is developed 
in~\cite{Demirhan2025CellFreeISAC}. In~\cite{Huang2022CoordPowerControl}
minimizes total transmit power under per-user SINR and Cram\'er-Rao bound (CRB) 
constraints via semidefinite relaxation, while \cite{Meng2025CoopIsacNet}
uses stochastic geometry and monotonic optimization to size cooperative clusters
and allocate power under backhaul limits. These designs provide centralized
benchmarks that jointly enforce communication QoS and sensing accuracy, but they
aggregate global CSI (and, in~\cite{Behdad2024multistatic}, sensing data) at the
centralized processing unit (CPU) and do not target distributed, fronthaul-limited operation with local
processing. With the exception of~\cite{Behdad2024multistatic}, they further
assume perfect CSI and neglect clutter.
 
A third line shifts processing toward the network edge. In~\cite{Elfiatoure2025multiTargetCF}, 
distributed APs switch between optimally switch between
communication and sensing modes and optimize power for
spectral efficiency fairness under mainlobe-to-average-sidelobe-ratio (MASR)
beampattern constraints, whereas \cite{Zou2024DistVsCent} shows that
local detection at the receive APs reduces fronthaul signaling
relative to centralized fusion. Although both embrace decentralization and
local processing, the former relies solely on a transmit-beampattern metric and
does not consider target echoes, clutter, or detection, while the latter treats
detection in isolation with pre-determined precoders and idealized, clutter-free
interference, leaving the underlying joint beamforming and power allocation
unaddressed. As summarized in Table~\ref{tab:related_works}, each of these designs meets only
a subset of the capabilities required for practical distributed cell-free ISAC.

\subsection{Research Gap and Contributions}
\label{subsec:contributions}

Practical distributed cell-free ISAC calls for a framework that jointly
optimizes beamforming and power in a distributed way under explicit
fronthaul and local computation limits, using space-time adaptive processing (STAP)-based sensing 
and assuming imperfect CSI and clutter.
We propose \textit{\textbf{CO}ordinated \textbf{R}esource allocation for
\textbf{D}istributed \textbf{I}SAC \textbf{S}ystems (CORDIS)}, a scalable
framework spanning partially centralized and fully decentralized implementations
that significantly generalizes our preliminary designs in~\cite{Zafari2025confAsilomar}. The main
contributions of this work listed below.
 
\begin{itemize}
\item \textbf{System model and analysis.}
We develop a system model for distributed cell-free ISAC over spatially
correlated channels with imperfect CSI and clutter, and derive the 
per-user SINR, MMSE channel estimates, and average
signal-to-clutter-plus-noise ratio (SCNR) that serves as the sensing performance metric 
(Sections~\ref{sec:system-model} and~\ref{sec:dl-isac-tx}).
 
\item \textbf{CORDIS-Split algorithm.}
We propose a low-overhead algorithm that decouples transmit beamforming from power
allocation. Each transmit AP forms a robust local MMSE precoder for communications precoder 
with target nulling, while the CPU solves a convex power
allocation problem, exchanging only a few real scalars per user over the fronthaul
(Section~\ref{sec:cordis-split}).
 
\item \textbf{CORDIS-ADMM algorithm.}
We propose a fully decentralized consensus-ADMM algorithm that jointly optimizes
beamforming and power allocation via a combined augmented-Lagrangian/majorization-%
minimization, with per-AP subproblems and with per-round fronthaul exchanges and 
per-APs computation that are
independent of the array size and number of APs, and that remain effective even
for rank-deficient channels (Section~\ref{sec:cordis-admm}).
 
\item \textbf{Performance evaluation.}
We provide extensive numerical evaluations with
clutter and imperfect CSI, characterizing the communication-sensing trade-off,
clutter suppression, CSI robustness, and low-rank scalability, showing that
CORDIS-ADMM approaches the centralized performance bound while preserving
cell-free scalability (Section~\ref{sec:results})\footnote{All plots, code base, algorithm implementations, and datasets are publicly available at \url{https://github.com/LS-Wireless/CORDIS}.}.
\end{itemize}

\subsection{Notation}
\label{subsec:organization}

The typefaces $x$, $\mathbf{x}$, $\mathbf{X}$, and $\mathcal{X}$ respectively denote scalars, vectors, matrices, and sets. The operators
$\mathbf{X}^*$, $\mathbf{X}^\top$, $\mathbf{X}^H$, and $\mathbf{X}^{-1}$ denote conjugation, transposition, conjugate transposition, and the matrix inverse, respectively. The vectorization, trace, Kronecker product, and Hadamard product operators are denoted by 
$\vect (\cdot)$, $\tr (\cdot)$, $\otimes$, and $\odot$, respectively. Expectation under the distribution of stochastic variable $x$ is denoted by $\E_{x}\{\cdot\}$, while $\R$ and $\C$ denote the sets of real and complex numbers, respectively.
Given a set of scalars $\{\mathrm{s}_n\}_{n=1}^N$ and a set of vectors $\{\mathbf{v}_n\}_{n=1}^N$, the column and matrix operators are defined as $\col\big(\{\mathrm{s}_n\}_{n=1}^N\big) = [\mathrm{s}_1, \cdots, \mathrm{s}_N]^\top$, $\col\big(\{\mathbf{v}_n\}_{n=1}^N\big) = [\mathbf{v}_1^\top, \cdots, \mathbf{v}_N^\top]^\top$, and $\mat\big(\{\mathbf{v}_n\}_{n=1}^N\big) = [\mathbf{v}_1, \cdots, \mathbf{v}_N]$.
Operators $\Re\{\cdot\}$ and $\Im\{\cdot\}$ denote real and imaginary part of a complex scalar, respectively.


\section{System Model}
\label{sec:system-model}

We consider a cell-free ISAC network consisting of $\Nap$ distributed APs, indexed by set $\setA$, that are connected to a CPU through wired fronthaul links in a star topology.
There exist $\Nue$ single-antenna communication users (UEs) indexed by $\setU = \{1, \cdots, \Nue\}$ and $\Nt$ targets indexed by $\setT = \{\Nue+1, \cdots, \Nue+\Nt\}$. The CPU clusters the APs into two non-overlapping subsets $\setAt$ and $\setAr$, assumed to be known.
The APs in $\setAt$ transmit using $\Mt$ antennas to serve users and illuminate targets, while the APs in $\setAr$ use $\Mr$ antennas to detect target echoes. We assume that each transmit AP \(a_t \in \setA_t\) jointly serves all users in \(\setU\) while concurrently illuminating a CPU-selected subset of targets \(\setT_{a_t}\subseteq \setT\), and the CPU assigns every receive AP \(a_r \in \setA_r\) a subset of targets \(\setT_{a_r}\subseteq \setT\) to process for detection or tracking.

We assume a standard block-fading communication channel model that remains constant over a time-frequency coherence block of size $\tau_c$ samples and varies independently from one coherence block to the next~\cite{Femenias2025ScalableCFmMIMO, Mao2024CSregion, Elfiatoure2025multiTargetCF, Behdad2024multistatic, Demirhan2025CellFreeISAC}.
The network operates on time-frequency samples under a synchronous  time division duplex (TDD) protocol.
Each TDD frame is composed of $\tau_f \leq \tau_c$ samples, of which $\tau_p$ samples are used for uplink training, $\tau_u$ for uplink data transmission, and $\tau_d$ for downlink ISAC transmission: $\tau_p + \tau_u + \tau_d = \tau_f$.
During uplink training, UEs transmit pilots to the APs in \(\setAt\), which use them to estimate the corresponding communication channels.
These channel estimates serve as the basis for designing the uplink combiners during the subsequent payload transmission phase,
as well as the downlink precoders used in the ISAC transmission phase.
The signals received at the sensing APs in \(\setAr\) during the downlink phase are processed for target detection and tracking.

\subsection{Communication Channel Model}

The channels between UE $u\in \setU$ and transmit AP $a \in \setAt$ is modeled as
\begin{equation}
    \mathbf{h}_{au} = \sqrt{\beta_{au}} \, (\mathbf{h}_{au}^\text{\los} + \mathbf{h}_{au}^\text{\nlos}),
\end{equation}
where $\beta_{au}$ captures large-scale fading and $\mathbf{h}_{au}^\text{\los}$ and $\mathbf{h}_{au}^\text{\nlos}$ denote the line-of-sight (LoS) and non-LoS (NLoS) components, respectively.
The LoS channel is represented as a deterministic array response with random phase, while the NLoS channel is generated by spatially correlated small-scale fading:
\begin{align}
    \mathbf{h}_{au}^\text{\los} &= \sqrt{\eta_{au}^\text{LoS}} \, e^{j\varphi_{au}} \, \mathbf{a}(\phi_{au}, \theta_{au}), \\
    \mathbf{h}_{au}^\text{\nlos} &= \sqrt{\eta_{au}^\text{NLoS}} \, \left(\mathbf{C}^{1/2}(\phi_{au}, \theta_{au}) \, \mathbf{q}_{au} + \mathbf{B}_{a} \, \mathbf{s}_{au}\right),
\end{align}
where $\eta_{au}^\text{LoS}$ and $\eta_{au}^\text{NLoS}$ characterize the relative LoS/NLoS power gains and satisfy $\eta_{au}^\text{LoS} + \eta_{au}^\text{NLoS} = 1$, $\varphi_{au} \sim \text{Unif}\,[-\pi, \pi]$, and $\mathbf{a}(\phi_{au}, \theta_{au})$ is the array response  parameterized by azimuth and elevation angles $\phi_{au}$ and $\theta_{au}$, respectively. The matrix $\mathbf{C}^{1/2}(\phi_{au}, \theta_{au})$ is a positive semi-definite (PSD) spatial correlation matrix at the $a$-th AP as seen from user $u$, and $\mathbf{q}_{au} \sim \CN(\mathbf{0}, \mathbf{I}_{\Mt})$ accounts for the Rayleigh component.
For simplicity, we define $\mathbf{a}_{au} \triangleq \mathbf{a}(\phi_{au}, \theta_{au})$ and $\mathbf{C}_{au} \triangleq \mathbf{C}(\phi_{au}, \theta_{au})$. The normalized array response satisfies $\|\mathbf{a}_{au}\|^2 = \Mt$. Spatial correlation between the channels of different UEs and a given AP $a$ is modeled by a shared subspace $\mathbf{B}_a \in \C^{\Mt \times r}$, with $r \ll \Mt$.
The corresponding UE-specific scattering coefficients are modeled as $\mathbf{s}_{au} \sim \CN(\mathbf{0}, \boldsymbol{\Sigma}_{au}) \in \C^{r}$.
Hence, the cross-UE channel covariance $\E\{\mathbf{h}_{au} \mathbf{h}_{av}^H \}$ for users $u,v\in \setU$ and $u\neq v$ is
\begin{align}
    \label{eq:comm-channel-covariance}
    \cov(\mathbf{h}_{au}, \mathbf{h}_{av}) &= \sqrt{\beta_{au} \eta_{au}^\text{NLoS} \, \beta_{av} \eta_{av}^\text{NLoS}}\, \mathbf{B}_a \E\{\mathbf{s}_{au} \mathbf{s}_{av}^H \} \mathbf{B}_a^H \nonumber \\
    &\triangleq \xi_{a,uv} \, \mathbf{B}_a \, \boldsymbol{\Gamma}_{a,uv} \, \mathbf{B}_a^H,
\end{align}
where $\boldsymbol{\Gamma}_{a,uv}$ is the cross-user covariance of $\mathbf{s}_{au}$ and  $\mathbf{s}_{av}$, and $\boldsymbol{\Gamma}_{a,uu} = \boldsymbol{\Sigma}_{au}$.
Thus, the spatial correlation matrix characterizing channel vector $\mathbf{h}_{au}$ is obtained as
\begin{align}
    \mathbf{R}_{au} = \beta_{au} \left( \eta_{au}^\text{LoS}\, \mathbf{a}_{au} \mathbf{a}_{au}^H + \eta_{au}^\text{NLoS}\, \widetilde{\mathbf{C}}_{au} \right),
\end{align}
where $\widetilde{\mathbf{C}}_{au} = \mathbf{C}_{au} + \mathbf{B}_a \boldsymbol{\Sigma}_{au} \mathbf{B}_a^H$ is normalized such that $\tr(\widetilde{\mathbf{C}}_{au}) = \Mt$.
Consequently, the total average channel power satisfies $\tr(\mathbf{R}_{au}) = \beta_{au} \Mt$.

\subsection{Sensing and Clutter Channel Model}

During the downlink phase, when a transmit AP $a_t \in \setAt$ and a receive AP $a_r \in \setAr$ collaborate in sensing target $t$ (i.e., $t \in \setT_{a_t}$ and $t \in \setT_{a_r}$), the corresponding sensing channel at slow-time symbol $\tau$ is modeled as the superposition of the target-reflection and an aggregate clutter component (we omit the explicit dependence on $t$ in the following when it is clear from the context):
\begin{equation}
    \mathbf{H}_{a_t a_r}^\text{sens}[\tau] = \mathbf{H}_{a_t a_r}^\text{tgt} + \mathbf{H}_{a_t a_r}^\text{clt}[\tau] \in \C^{\Mr \times \Mt},
\end{equation}
where $\mathbf{H}_{a_t a_r}^\text{tgt}$ represents the rank-one LoS path between $a_t$ and $a_r$ induced by the target reflection, and $\mathbf{H}_{a_t a_r}^\text{clt}[\tau]$ captures the aggregate clutter returns, which exhibit temporal correlation arising from Doppler-dependent scattering.

The LoS component of the sensing channel is modeled as
\begin{align}
    \mathbf{H}_{a_t a_r}^\text{tgt} = s_t \, \sqrt{\beta_{a_t a_r}^\text{tgt}} \, \zeta_t \, \mathbf{a}_{a_r} \mathbf{a}_{a_t}^H \triangleq \alpha_{a_t a_r} \, \mathbf{a}_{a_r} \mathbf{a}_{a_t}^H,
\end{align}
where $\beta_{a_t a_r}^\text{tgt}$ is the bi-static path loss, $\zeta_t \sim \CN(0, \sigma_\text{RCS}^2)$ is the radar cross-section (RCS) of target $t$ as seen by AP pair $(a_t, a_r)$, $\mathbf{a}_{a_r}$ and $\mathbf{a}_{a_t}$ are the receive and transmit array steering vectors, respectively, and $s_t \in \{0, 1\}$ indicates LoS availability: $s_t=0$ when target $t$ is absent or the LoS path is blocked, and $s_t=1$ otherwise.
For brevity we define $\alpha_{a_t a_r} \triangleq s_t \, \sqrt{\beta_{a_t a_r}^{\text{tgt}}}\,\zeta_t$.
The clutter component is modeled using a separable space-time correlation structure~\cite{Richards2010BookRadar, Femenias2025ScalableCFmMIMO}:
\begin{align}
    \label{eq:clutter-channel}
    \mathbf{H}_{a_t a_r}^\text{clt}[\tau] = \sigma_\text{clt}\, \mathbf{C}_{a_r}^{1/2}\, \mathbf{Q}_{a_t a_r}[\tau]\, \mathbf{C}_{a_t}^{1/2},
\end{align}
where $\sigma_\text{clt}$ represents the clutter channel gain, and $\mathbf{C}_{a_r} = \mathbf{C}_{a_r}(\phi_{a_r a_t}, \theta_{a_r a_t})$ and $\mathbf{C}_{a_t} = \mathbf{C}_{a_t}(\phi_{a_t a_r}, \theta_{a_t a_r})$ model the spatial correlation for $a_r$ and $a_t$, respectively. These statistics are assumed to be known and constant over the coherence block, and are normalized to satisfy $\tr(\mathbf{C}_{a_r}) = \Mr$ and $\tr(\mathbf{C}_{a_t}) = \Mt$.
To capture dynamic clutter Doppler characteristics, the elements of the unstructured scattering matrix $\mathbf{Q}_{a_t a_r}[\tau] \in \C^{\Mr \times \Mt}$ are modeled as zero-mean, circularly symmetric complex Gaussian stationary random processes.
Defining $\mathbf{q}_{a_t a_r}[\tau] = \vect(\mathbf{Q}_{a_t a_r}[\tau])$, the temporal auto-correlation of this scattering component is $\E\{\mathbf{q}_{a_t a_r}[\tau_1] \mathbf{q}_{a_t a_r}^H[\tau_2] \} = \rho_\text{clt}(\tau_1 - \tau_2) \mathbf{I}_{\Mr \Mt}$,
where, $\rho_\text{clt}(\Delta\tau)$ is the normalized slow-time correlation function that defines the clutter Doppler power spectral density, with $\rho_\text{clt}(0)=1$.

\begin{proposition}
\label{prop:sensing-channel}
Let $\mathbf{h}_{a_t a_r}^\textnormal{sens}[\tau]$ be the vectorized sensing channel matrix at symbol $\tau$:
\begin{equation}
    \mathbf{h}_{a_t a_r}^\textnormal{sens}[\tau] = \vect(\mathbf{H}_{a_t a_r}^\textnormal{sens}[\tau]) = \mathbf{h}_{a_t a_r}^\textnormal{tgt} + \mathbf{h}_{a_t a_r}^\textnormal{clt}[\tau],
\end{equation}
with vectorized channels $\mathbf{h}_{a_t a_r}^\textnormal{tgt}$ and $\mathbf{h}_{a_t a_r}^\textnormal{clt}$ defined as
\begin{align}
    \mathbf{h}_{a_t a_r}^\textnormal{tgt} &= \alpha_{a_t a_r} \left(\mathbf{a}_{a_t}^* \otimes \mathbf{I}_{\Mr} \right) \mathbf{a}_{a_r}, \\
    \mathbf{h}_{a_t a_r}^\textnormal{clt}[\tau] &= \sigma_\textnormal{clt} \left(\mathbf{C}_{a_t}^{\top/2} \otimes \mathbf{C}_{a_r}^{1/2} \right) \mathbf{q}_{a_t a_r}[\tau],
\end{align}
where $\mathbf{q}_{a_t a_r}[\tau] \sim \CN(\mathbf{0}, \mathbf{I}_{\Mr \Mt})$.
Then, the spatial correlation matrix characterizing $\mathbf{h}_{a_t a_r}^\textnormal{sens}[\tau]$ is expressed as
\begin{equation}
    \mathbf{R}_{a_t a_r}^\textnormal{sens} = \E_{\mathbf{q},\zeta_t}\{\mathbf{h}_{a_t a_r}^\textnormal{sens}[\tau] {\mathbf{h}_{a_t a_r}^\textnormal{sens}}^H[\tau]\} = \mathbf{R}_{a_t a_r}^\textnormal{tgt} + \mathbf{R}_{a_t a_r}^\textnormal{clt},
\end{equation}
with PSD correlation matrices $\mathbf{R}_{a_t a_r}^\textnormal{tgt}$ and $\mathbf{R}_{a_t a_r}^\textnormal{clt}$ defined as
\begin{align}
    \mathbf{R}_{a_t a_r}^\textnormal{tgt} &= s_t \beta_{a_t a_r}^\textnormal{tgt} \sigma_\textnormal{RCS}^2 \left(\mathbf{A}_{a_t}^\top \otimes \mathbf{A}_{a_r} \right), \\
    \mathbf{R}_{a_t a_r}^\textnormal{clt} &= \sigma_\textnormal{clt}^2 \left(\mathbf{C}_{a_t}^\top \otimes \mathbf{C}_{a_r} \right), 
\end{align}
where $\mathbf{A}_{a_t} = \mathbf{a}_{a_t} \mathbf{a}_{a_t}^H$ and $\mathbf{A}_{a_r} = \mathbf{a}_{a_r} \mathbf{a}_{a_r}^H$.
As expected, $\mathbf{R}_{a_t a_r}^\textnormal{clt}$ is strictly independent of the slow-time index $\tau$.
The average channel gain satisfies $\tr(\mathbf{R}_{a_t a_r}^\textnormal{sens}) = (s_t \beta_{a_t a_r}^\textnormal{tgt} \sigma_\textnormal{RCS}^2 + \sigma_\textnormal{clt}^2) \Mr \Mt$.
\end{proposition}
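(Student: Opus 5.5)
The plan is to vectorize the two components separately with the identity $\vect(\mathbf{X}\mathbf{Y}\mathbf{Z}) = (\mathbf{Z}^\top \otimes \mathbf{X})\vect(\mathbf{Y})$, then compute second moments using the mixed-product rule $(\mathbf{A}\otimes\mathbf{B})(\mathbf{C}\otimes\mathbf{D}) = (\mathbf{A}\mathbf{C})\otimes(\mathbf{B}\mathbf{D})$ and independence of $\zeta_t$ and $\mathbf{q}_{a_t a_r}[\tau]$.

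\textbf{Target term.} Write $\mathbf{a}_{a_r}\mathbf{a}_{a_t}^H = \mathbf{I}_{\Mr}\,\mathbf{a}_{a_r}\,\mathbf{a}_{a_t}^H$ and apply the identity with $\mathbf{X}=\mathbf{I}_{\Mr}$, $\mathbf{Y}=\mathbf{a}_{a_r}$, $\mathbf{Z}=\mathbf{a}_{a_t}^H$. Since $(\mathbf{a}_{a_t}^H)^\top = \mathbf{a}_{a_t}^*$, this gives $\mathbf{h}^\text{tgt}_{a_t a_r} = \alpha_{a_t a_r}(\mathbf{a}_{a_t}^*\otimes\mathbf{I}_{\Mr})\mathbf{a}_{a_r}$, which equals $\alpha_{a_t a_r}\,\mathbf{a}_{a_t}^*\otimes\mathbf{a}_{a_r}$. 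Then $\mathbf{h}^\text{tgt}{\mathbf{h}^\text{tgt}}^H = |\alpha_{a_t a_r}|^2 (\mathbf{a}_{a_t}^*\mathbf{a}_{a_t}^\top)\otimes(\mathbf{a}_{a_r}\mathbf{a}_{a_r}^H)$. Note that $\mathbf{a}_{a_t}^*\mathbf{a}_{a_t}^\top = \mathbf{A}_{a_t}^\top$. Also $\E|\alpha_{a_t a_r}|^2 = s_t^2\beta^\text{tgt}_{a_t a_r}\sigma_\text{RCS}^2 = s_t\beta^\text{tgt}_{a_t a_r}\sigma_\text{RCS}^2$, because $s_t\in\{0,1\}$.

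\textbf{Clutter term.} Apply the same identity to \eqref{eq:clutter-channel}. Here $(\mathbf{C}_{a_t}^{1/2})^\top = \mathbf{C}_{a_t}^{\top/2}$. The stated distribution $\mathbf{q}_{a_t a_r}[\tau]\sim\CN(\mathbf{0},\mathbf{I})$ follows from $\rho_\text{clt}(0)=1$. The covariance is then $\sigma_\text{clt}^2(\mathbf{C}_{a_t}^{\top/2}\otimes\mathbf{C}_{a_r}^{1/2})(\mathbf{C}_{a_t}^{\top/2}\otimes\mathbf{C}_{a_r}^{1/2})^H$. By the mixed-product rule, and since $\mathbf{C}^{1/2}$ is Hermitian PSD (so $(\mathbf{C}^{\top/2})^H = \mathbf{C}^{*/2} = \mathbf{C}^{\top/2}$), this collapses to $\sigma_\text{clt}^2\,\mathbf{C}_{a_t}^\top\otimes\mathbf{C}_{a_r}$. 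This step needs the most care: one must check that the conjugate-transpose of the transposed square root gives back the transposed square root, and that its square is $\mathbf{C}_{a_t}^\top$. The result depends on $\tau$ only through $\rho_\text{clt}(0)$, so it is $\tau$-independent.

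\textbf{Cross terms and remaining claims.} The cross terms vanish because $\mathbf{q}$ is zero mean and independent of $\zeta_t$. Both matrices are PSD, since each is a nonnegative multiple of a Kronecker product of PSD matrices (transposes of PSD matrices are PSD). Finally, $\tr(\mathbf{A}\otimes\mathbf{B}) = \tr\mathbf{A}\,\tr\mathbf{B}$. Combining this with $\tr\mathbf{A}_{a_t}=\|\mathbf{a}_{a_t}\|^2=\Mt$, $\tr\mathbf{A}_{a_r}=\Mr$ (by the array normalization) and the trace normalizations of $\mathbf{C}_{a_t}$ and $\mathbf{C}_{a_r}$ gives the stated average gain.
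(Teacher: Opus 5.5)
Your proposal is correct. The paper omits its proof, and what you give is the direct computation it implicitly relies on: the $\vect(\mathbf{X}\mathbf{Y}\mathbf{Z})$ identity, the mixed-product rule with the Hermitian PSD square root so that $\mathbf{C}_{a_t}^{\top/2}(\mathbf{C}_{a_t}^{\top/2})^H=\mathbf{C}_{a_t}^\top$, vanishing cross terms, and $\tr(\mathbf{A}\otimes\mathbf{B})=\tr\mathbf{A}\,\tr\mathbf{B}$. The assumptions you invoke beyond the stated model (a Hermitian rather than a non-normal square root, independence of $\zeta_t$ and $\mathbf{q}_{a_t a_r}[\tau]$, and $\|\mathbf{a}_{a_t}\|^2=M_t$, $\|\mathbf{a}_{a_r}\|^2=M_r$, which the paper states only for the communication array responses) are exactly the ones the stated result needs.
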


\begin{proof}
The proof is omitted for brevity.
\end{proof}

\subsection{Uplink Channel Estimation}

For the uplink pilot transmission phase, denote the pilot sequence allocated to user $k$ by $\boldsymbol{\varphi}_k \in \C^{\tau_p}$.
In practical scenarios it is often the case that $\Nue \geq \tau_p$, which means that some users must share the same pilot sequence, leading to pilot contamination. Let $\mathcal{P}_k$ denote the set of users sharing the same pilot sequence with user $k$, including itself.
Thus, $\boldsymbol{\varphi}_k^H \boldsymbol{\varphi}_l = \tau_p, \forall l\in \mathcal{P}_k$ and $\boldsymbol{\varphi}_k^H \boldsymbol{\varphi}_l = 0$ otherwise.

The received pilot signal matrix at AP $a$ is expressed as
\begin{align}
    \textstyle \mathbf{Y}_a^\text{p} = \sqrt{P_p} \sum_{k=1}^{\Nue} \mathbf{h}_{ak} \boldsymbol{\varphi}_k^\top + \mathbf{N}_a^\text{p} \; \in \C^{\Mt \times \tau_p},
\end{align}
where $P_p$ is the maximum available transmit power per pilot at the UEs and $\mathbf{N}_a^\text{p}$ denotes noise.
Assuming that transmit APs $a \in \setAt$ have knowledge of the second-order channel statistics (i.e., correlation matrices), they can obtain sufficient statistics for estimating the channel vector $\mathbf{h}_{au}$ by projecting $\mathbf{Y}_a^\text{p}$ on the corresponding pilot sequence vector $\boldsymbol{\varphi}_u$:
\begin{align}
    \label{eq:pilot-observation}
    \textstyle \mathbf{y}_{au}^\text{p} = \mathbf{Y}_a^\text{p} \boldsymbol{\varphi}_u^* = \sqrt{P_p} \tau_p \sum_{k\in \mathcal{P}_u} \mathbf{h}_{ak} + \mathbf{n}_{au}^\text{p},
\end{align}
where $\mathbf{n}_{au}^\text{p} = \mathbf{N}_a^\text{p} \boldsymbol{\varphi}_u^*$.
The resulting vector $\mathbf{y}_{au}^\text{p} \in \C^{\Mt}$ serves as the sufficient statistic for estimating channel vector $\mathbf{h}_{au}$.

\begin{proposition}
\label{prop:channel-estimation}
Under the pilot observation~\eqref{eq:pilot-observation}, the covariance matrix of $\mathbf{y}_{au}^\textnormal{p}$, denoted by $\boldsymbol{\Psi}_{au}$, the cross-covariance matrix $\mathbf{D}_{au} = \cov(\mathbf{h}_{au}, \mathbf{y}_{au}^\textnormal{p})$, and the linear MMSE estimate of the channel vector $\mathbf{h}_{au}$, denoted by $\widehat{\mathbf{h}}_{au}$, are obtained as
\begin{align}
    \mathbf{D}_{au} &= \sqrt{P_p} \tau_p \big(\mathbf{R}_{au} + \sum_{k\in \mathcal{P}_u\setminus u} \xi_{a,uk}\, \mathbf{B}_a \, \boldsymbol{\Gamma}_{a,uk} \, \mathbf{B}_a^H   \big), \\
    \boldsymbol{\Psi}_{au} &= P_p \tau_p^2 \sum_{l\in \mathcal{P}_u} \big( \mathbf{R}_{al} + \sum_{k\in \mathcal{P}_u \setminus l} \xi_{a,lk} \mathbf{B}_a \boldsymbol{\Gamma}_{a,lk} \, \mathbf{B}_a^H \big) \nonumber\\
    & \quad +\, \sigma_{n,a}^2 \tau_p \mathbf{I}_{\Mt}, \\
    \widehat{\mathbf{h}}_{au} &= \mathbf{D}_{au}\, \boldsymbol{\Psi}_{au}^{-1}\, \mathbf{y}_{au}^\textnormal{p}. \label{eq:channel-estimate}
\end{align}
The mean and covariance of the channel estimates are
\begin{align}
    \mathbb{E} \{\widehat{\mathbf{h}}_{au}\} = \mathbf{0}, \
    \widehat{\mathbf{R}}_{au} \triangleq \mathbb{E} \{\widehat{\mathbf{h}}_{au}\widehat{\mathbf{h}}_{au}^H\}
    = \mathbf{D}_{au}\boldsymbol{\Psi}_{au}^{-1}\mathbf{D}_{au}^H.
\end{align}
\end{proposition}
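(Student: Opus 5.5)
The plan is to apply the standard LMMSE formula $\widehat{\mathbf{h}}_{au} = \cov(\mathbf{h}_{au},\mathbf{y}_{au}^\textnormal{p})\,\cov(\mathbf{y}_{au}^\textnormal{p})^{-1}\mathbf{y}_{au}^\textnormal{p}$, which is valid once both $\mathbf{h}_{au}$ and $\mathbf{y}_{au}^\textnormal{p}$ are zero mean. So I first establish the zero-mean property. The LoS component carries the phase $e^{j\varphi_{au}}$ with $\varphi_{au}\sim\text{Unif}[-\pi,\pi]$, so its expectation vanishes. The NLoS terms $\mathbf{q}_{au}$ and $\mathbf{s}_{au}$ are zero-mean Gaussian, and the noise is zero mean. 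Hence $\E\{\mathbf{h}_{au}\}=\mathbf{0}$, and by linearity of \eqref{eq:pilot-observation} also $\E\{\mathbf{y}_{au}^\textnormal{p}\}=\mathbf{0}$. Covariances then coincide with correlations, and the estimator has the claimed form \eqref{eq:channel-estimate}.

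Next I compute the cross-covariance. From \eqref{eq:pilot-observation} I get $\mathbf{D}_{au}=\sqrt{P_p}\tau_p\sum_{k\in\mathcal{P}_u}\E\{\mathbf{h}_{au}\mathbf{h}_{ak}^H\}$, using that the noise is independent of the channels. The $k=u$ term equals $\mathbf{R}_{au}$ by definition. For $k\neq u$ I need the cross-user structure:
\begin{itemize}
\item The phases $\varphi_{au}$ and $\varphi_{ak}$ are independent and uniform, so the LoS cross terms vanish.
\item $\mathbf{q}_{au}$ is taken independent across users, so the $\mathbf{C}^{1/2}$ cross terms vanish.
\item Only the shared-subspace term survives, namely $\sqrt{\beta_{au}\eta^\textnormal{NLoS}_{au}\beta_{ak}\eta^\textnormal{NLoS}_{ak}}\,\mathbf{B}_a\E\{\mathbf{s}_{au}\mathbf{s}_{ak}^H\}\mathbf{B}_a^H=\xi_{a,uk}\mathbf{B}_a\boldsymbol{\Gamma}_{a,uk}\mathbf{B}_a^H$, exactly as in \eqref{eq:comm-channel-covariance}.
\end{itemize}
This yields the stated $\mathbf{D}_{au}$.

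For $\boldsymbol{\Psi}_{au}$ I expand $\E\{\mathbf{y}\mathbf{y}^H\}=P_p\tau_p^2\sum_{l,k\in\mathcal{P}_u}\E\{\mathbf{h}_{al}\mathbf{h}_{ak}^H\}+\E\{\mathbf{n}_{au}^\textnormal{p}{\mathbf{n}_{au}^\textnormal{p}}^H\}$. I group the double sum by the outer index $l$: the diagonal term gives $\mathbf{R}_{al}$, and the off-diagonal terms give $\xi_{a,lk}\mathbf{B}_a\boldsymbol{\Gamma}_{a,lk}\mathbf{B}_a^H$. For the noise, I assume $\mathbf{N}_a^\textnormal{p}$ has i.i.d. $\CN(0,\sigma_{n,a}^2)$ entries. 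Then $\E\{\mathbf{N}\boldsymbol{\varphi}_u^*\boldsymbol{\varphi}_u^\top\mathbf{N}^H\}=\sigma_{n,a}^2\|\boldsymbol{\varphi}_u\|^2\mathbf{I}_{\Mt}=\sigma_{n,a}^2\tau_p\mathbf{I}_{\Mt}$. This term also makes $\boldsymbol{\Psi}_{au}$ positive definite, so the inverse exists.

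Finally, $\E\{\widehat{\mathbf{h}}_{au}\}=\mathbf{D}_{au}\boldsymbol{\Psi}_{au}^{-1}\E\{\mathbf{y}_{au}^\textnormal{p}\}=\mathbf{0}$. The covariance is $\mathbf{D}_{au}\boldsymbol{\Psi}_{au}^{-1}\E\{\mathbf{y}\mathbf{y}^H\}\boldsymbol{\Psi}_{au}^{-1}\mathbf{D}_{au}^H=\mathbf{D}_{au}\boldsymbol{\Psi}_{au}^{-1}\mathbf{D}_{au}^H$, using that $\boldsymbol{\Psi}_{au}$ is Hermitian.

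The main obstacle is the cross-user term. It requires stating independence assumptions the model leaves implicit: independent phases $\varphi$ across users, independence of $\mathbf{q}_{au}$ across users and from $\mathbf{s}$, and noise independent of the channels. It also requires noting that the problem is not jointly Gaussian, because of the random-phase LoS component. The estimator is therefore only the linear MMSE estimator, and I will phrase the argument in second-order terms only, so that Gaussianity is never needed.
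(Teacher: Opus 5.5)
Your proposal is correct: you show zero means, compute $\mathbf{D}_{au}$ and $\boldsymbol{\Psi}_{au}$ from the cross-user covariance, and plug them into the linear MMSE formula, which is the standard derivation the paper omits "for brevity." The independence assumptions you make explicit are exactly the ones the model leaves implicit and the stated expressions need: independent uniform LoS phases, $\mathbf{q}_{au}$ independent across users and of $\mathbf{s}_{au}$, and noise with i.i.d.\ $\CN(0,\sigma_{n,a}^2)$ entries independent of the channels. Your observation that the random-phase LoS term makes the model non-Gaussian, so the result is only the linear MMSE estimator, is also accurate.
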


\begin{proof}
The proof is omitted for brevity.
\end{proof}

\section{Downlink ISAC Transmission}
\label{sec:dl-isac-tx}

In each TDD frame of $\tau_f$ symbols, $\tau_d$ symbols are allocated to downlink ISAC transmission.
During this phase, every transmit AP $a_t \in \setAt$ simultaneously delivers payload data to the UEs in $\setU$ and illuminates a CPU-assigned subset of targets $\setT_{a_t}$, while each receive AP $a_r \in \setAr$ collects the corresponding echo measurements for its assigned target subset $\setT_{a_r}$.
We assume that AP clustering as well as target scheduling and AP-target association are determined centrally at the CPU (e.g.,~\cite{Zafari2026ASSENT, Memisoglu2024Scheduling}) and the resulting decisions are communicated to the APs.
In the following we develop mathematical models for downlink multi-UE communication and sensing, leading to expressions for the UE SINR and the target SCNR metrics.

\subsection{Downlink Communication Model}

The downlink signal transmitted by AP $a_t \in \setAt$ in the $\tau$-th symbol period, with $\tau \in \{1, \cdots, \tau_d \}$, is expressed as
\begin{align}
    \mathbf{x}_{a_t}[\tau] &= \sum_{i\in \setD} \mathbf{w}_{a_t i}\, \mathrm{s}_i[\tau] = \sum_{u\in \setU} \mathbf{w}_{a_t u}\, \mathrm{s}_u[\tau] + \sum_{t\in \setT} \mathbf{w}_{a_t t}\, \mathrm{s}_t[\tau] \nonumber \\
    &= \mathbf{W}_{a_t} \mathbf{s}[\tau] = \mathbf{W}_{a_t}^{(c)} \mathbf{s}^{(c)}[\tau] + \mathbf{W}_{a_t}^{(s)} \mathbf{s}^{(s)}[\tau],
\end{align}
where $\setD = \setU \cup \setT = \{1, \cdots, \Nue + \Nt\}$ denotes the index set of all downlink streams, $\mathbf{w}_{a_t u} \in \C^{\Mt}$ is the precoder for the payload data symbol $\mathrm{s}_u [\tau]$ intended for UE $u$, and $\mathbf{w}_{a_t t} \in \C^{\Mt}$ is the precoder for sensing symbol $\mathrm{s}_t [\tau]$ for target $t$. If target $t$ is not assigned to AP $a_t$, i.e., $t \notin \setT_{a_t}$, then $\mathbf{w}_{a_t t} = \mathbf{0}$.
The precoding vectors at AP $a_t$ are stacked as the columns of the precoding matrix $\mathbf{W}_{a_t} = [\mathbf{W}_{a_t}^{(c)}\,|\, \mathbf{W}_{a_t}^{(s)}] \in \C^{\Mt \times (\Nue + \Nt)}$, where $\mathbf{W}_{a_t}^{(c)} = \mat\big(\{\mathbf{w}_{a_t u}\}_{u\in \setU}\big)$ and $\mathbf{W}_{a_t}^{(s)} = \mat\big(\{\mathbf{w}_{a_t t}\}_{t\in \setT}\big)$.
Finally, we define the symbol vectors $\mathbf{s}[\tau] = \col\big(\{\mathrm{s}_i[\tau]\}_{i\in\setD}\big)$, $\mathbf{s}^{(c)}[\tau] = \col\big(\{\mathrm{s}_u[\tau]\}_{u\in\setU}\big)$, and $\mathbf{s}^{(s)}[\tau] = \col\big(\{\mathrm{s}_t[\tau]\}_{t\in\setT}\big)$.

Assuming that both the payload data and sensing symbols are zero-mean i.i.d. complex random variables satisfying $\E\{|\mathrm{s}_u[\tau]|^2 \} = \E\{|\mathrm{s}_t[\tau]|^2 \} = 1$, the signal vector transmitted by AP $a_t$ must satisfy the average power constraint 
\begin{align}
    \E\{\|\mathbf{x}_{a_t}[\tau]\|^2\} &= \E\{\|\mathbf{W}_{a_t}^{(c)}\|_F^2\} + \E\{\|\mathbf{W}_{a_t}^{(s)}\|_F^2\} \nonumber \\
      &= \E\{\|\mathbf{W}_{a_t}\|_F^2\} \leq P_{\max},
\end{align}
where $\Pmax$ is the total available power at each transmit AP. Writing the channel as a sum of its estimate and error, $\mathbf{h}_{a_t u} = \widehat{\mathbf{h}}_{a_t u} + \widetilde{\mathbf{h}}_{a_t u}$, the signal received by UE $u$ for symbol $\tau$ is
\begin{align}
    \mathrm{y}_u[\tau] &= \sum_{a_t \in \setAt} \widehat{\mathbf{h}}_{a_t u}^H \mathbf{x}_{a_t}[\tau] + \sum_{a_t \in \setAt} \widetilde{\mathbf{h}}_{a_t u}^H \mathbf{x}_{a_t}[\tau] + \mathrm{n}_u[\tau] \nonumber \\
    &\triangleq \widehat{\mathrm{y}}_u[\tau] + \widetilde{\mathrm{y}}_u[\tau] + \mathrm{n}_u[\tau],
\end{align}
where $\mathrm{n}_u[\tau] \sim \CN(0, \sigma_{n,u}^2)$ is noise, $\widehat{\mathrm{y}}_u[\tau]$ is the contribution associated with the estimated channel, and $\widetilde{\mathrm{y}}_u[\tau]$ captures the residual distortion due to channel estimation error.
The term $\widehat{\mathrm{y}}_u[\tau]$ can be further expressed as
\begin{align}
    \widehat{\mathrm{y}}_u[\tau] = & \underbrace{\sum_{a_t \in \setAt} \widehat{\mathbf{h}}_{a_t u}^H \mathbf{w}_{a_t u} \mathrm{s}_u[\tau]}_\text{Comm Desired Signal (CDS)} + \underbrace{\sum_{a_t \in \setAt} \sum_{k\in\setU \setminus u} \widehat{\mathbf{h}}_{a_t u}^H \mathbf{w}_{a_t k} \mathrm{s}_k[\tau]}_\text{Multi-User Interference (MUI)} \nonumber \\
    &+ \underbrace{\sum_{a_t \in \setAt} \sum_{t \in \setT} \widehat{\mathbf{h}}_{a_t u}^H \mathbf{w}_{a_t t} \mathrm{s}_t[\tau]}_\text{Sensing-to-Comm Interference (S2CI)}.
\end{align}

Leveraging the independence of the MMSE estimate $\widehat{\mathbf{h}}_{a_t u}$ and the estimation error $\widetilde{\mathbf{h}}_{a_t u}$, the power of $\widetilde{\mathrm{y}}_u[\tau]$ can be obtained as
\begin{align}
    P_\text{CSI-Error} = \E\{|\widetilde{\mathrm{y}}_u[\tau]|^2\} = \sum_{a_t \in \setAt} \tr\big(\mathbf{W}_{a_t}^H \widetilde{\mathbf{R}}_{a_t u} \mathbf{W}_{a_t} \big) \nonumber \\
    = \sum_{a_t \in \setAt} \left(\sum_{k\in\setU} \mathbf{w}_{a_t k}^H \widetilde{\mathbf{R}}_{a_t u} \mathbf{w}_{a_t k} + \sum_{t\in\setT} \mathbf{w}_{a_t t}^H \widetilde{\mathbf{R}}_{a_t u} \mathbf{w}_{a_t t} \right).
\end{align}
Therefore, given the channel estimates and the error covariance matrices $\widetilde{\mathbf{R}}_{a_t u}$, the instantaneous SINR for UE $u$ conditioned on the collective channel estimate matrix $\widehat{\mathbf{H}}$ is 
\begin{align}
    \text{SINR}_u(\widehat{\mathbf{H}}) = \frac{P_\text{CDS}}{P_\text{MUI} + P_\text{S2CI} + P_\text{CSI-Error} + P_\text{Noise}},
\end{align}
with whose terms are specified in~\eqref{eq:sinr} on the next page.

\begin{figure*}[!t]
\begin{equation}
    \label{eq:sinr}
    \text{SINR}_u (\widehat{\mathbf{H}}) = \frac{|\sum_{a_t \in \setAt} \widehat{\mathbf{h}}_{a_t u}^H \mathbf{w}_{a_t u}|^2}{\sum_{k\in\setU \setminus u} |\sum_{a_t \in \setAt}\widehat{\mathbf{h}}_{a_t u}^H \mathbf{w}_{a_t k}|^2 + \sum_{t \in \setT} |\sum_{a_t \in \setAt} \widehat{\mathbf{h}}_{a_t u}^H \mathbf{w}_{a_t t}|^2 + \sum_{a_t \in \setAt} \tr\big(\mathbf{W}_{a_t}^H \widetilde{\mathbf{R}}_{a_t u} \mathbf{W}_{a_t} \big) + \sigma_{n,u}^2}
\end{equation}
\hrulefill
\end{figure*}

\subsection{Multi-Static Sensing Model}

We assume a set \(\setT\) of candidate range-angle bins identified by the network as potential target locations.
Each transmit AP \(a_t \in \setAt\) is assigned to illuminate a subset \(\setT_{a_t}\subseteq\setT\), while each dedicated receive AP \(a_r \in \setAr\) is assigned a subset \(\setT_{a_r}\subseteq\setT\) whose corresponding echoes are processed for detection or tracking.
Since the receive processing at AP $a_r$ follows the same steps for every \(t\in\setT_{a_r}\), we omit the target index $t$ in what follows whenever it is clear from context.
We initially condition the sensing on the transmit signal matrix $\mathbf{X}\triangleq \mat\!\big(\{\mathbf{x}_{a_t}[\tau]\}_{a_t\in\setAt,\;\tau=0,\ldots,T-1}\big)$, assumed known at the sensing receiver.
This yields a conditional SCNR expression.
We then take the expectation over the random transmit symbols to obtain the corresponding average SCNR, used in the subsequent analysis and optimization.

To simplify the presentation, we develop the model for the azimuth-angle-only case, and ignore dependence on the elevation angle $\theta_t$. For the range bin corresponding to target $t \in \setT_{a_r}$, the signal received at AP $a_r \in \setAr$ for time-domain symbol $\tau \in \{1, \cdots, \tau_d\}$ is 
\begin{align}
    \mathbf{y}_{a_r}^\text{(s)}[\tau] = &\sum_{a_t \in \setAt} s_t \sqrt{\beta_{a_t a_r}^\text{tgt}} \zeta_t e^{j\phi_D[\tau]} \mathbf{a}_{a_r}(\phi_t) \mathbf{a}_{a_t}^H \mathbf{x}_{a_t}[\tau] \nonumber \\
    &+ \sum_{a_t \in \setAt} \mathbf{H}_{a_t a_r}^\text{clt}[\tau] \mathbf{x}_{a_t}[\tau] + \mathbf{n}_{a_r}[\tau],
\end{align}
where $\phi_D[\tau]=2\pi f_D T_s \tau$ represents the phase shift due to the Doppler frequency $f_D$, $T_s$ is the symbol period, and $\mathbf{n}_{a_r}[\tau] \sim \CN(\mathbf{0}, \sigma_{n,a_r}^2 \mathbf{I}_{\Mr})$ is noise.
Define $\mathbf{g}_{a_r}[\tau] \triangleq \sum_{a_t \in \setAt} \mathbf{H}_{a_t a_r}^\text{clt}[\tau] \mathbf{x}_{a_t}[\tau] + \mathbf{n}_{a_r}[\tau]$ and $\gamma_{a_r}[\tau] \triangleq \sum_{a_t \in \setAt} \sqrt{\beta_{a_t a_r}^\text{tgt}} \mathbf{a}_{a_t}^H \mathbf{x}_{a_t}[\tau]$, where $\mathbf{g}_{a_r}[\tau]$ denotes the aggregate clutter plus noise term.
Hence, we can rewrite the received signal as $\mathbf{y}_{a_r}^\text{(s)}[\tau] = s_t \zeta_t e^{j\phi_D[\tau]} \gamma_{a_r}[\tau] \mathbf{a}_{a_r}(\phi_t) + \mathbf{g}_{a_r}[\tau]$.

\begin{lemma}
\label{lemma:per-symbol-g-correlation}
For a given transmit signal $\mathbf{X}$, the conditional covariance of the clutter-plus-noise vector $\mathbf{g}_{a_r}[\tau]$ is strictly dependent on the instantaneous transmit signals.
Under the separable space-time clutter model in~\eqref{eq:clutter-channel} with normalized Doppler correlation $\rho_\textnormal{clt}(\Delta\tau)$ and auto-correlation given in Proposition~\ref{prop:sensing-channel}, the conditional cross-time covariance between symbols $\tau_1$ and $\tau_2$ is 
\begin{align}
    &\mathbf{R}_{g_{a_r}}[\tau_1, \tau_2 \,|\, \mathbf{X}] = \sigma_\textnormal{clt}^2\, \rho_\textnormal{clt}(\tau_1-\tau_2) \cdot \nonumber \\
    &\sum_{a_t\in\setAt} \left(\mathbf{x}_{a_t}^H[\tau_2] \mathbf{C}_{a_t} \mathbf{x}_{a_t}[\tau_1] \right) \mathbf{C}_{a_r} + \sigma_{n,a_r}^2 \delta(\tau_1-\tau_2) \mathbf{I}_{\Mr},
\end{align}
where $\delta(\cdot)$ is the Dirac delta function.
The per-symbol conditional covariance is obtained by setting $\tau_1 = \tau_2 = \tau$ and using $\rho_\textnormal{clt}(0) = 1$.
\end{lemma}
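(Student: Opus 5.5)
We compute the covariance directly from the definition $\mathbf{g}_{a_r}[\tau] = \sum_{a_t\in\setAt}\mathbf{H}_{a_t a_r}^\text{clt}[\tau]\mathbf{x}_{a_t}[\tau] + \mathbf{n}_{a_r}[\tau]$, treating $\mathbf{X}$ as deterministic. The only randomness left is the scattering processes $\mathbf{Q}_{a_t a_r}[\tau]$ and the noise.

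\emph{Cross terms.} We assume, as is implicit in the model, that:
\begin{itemize}
\item the noise is zero-mean, temporally white, and independent of the clutter;
\item the scattering matrices $\mathbf{Q}_{a_t a_r}$ and $\mathbf{Q}_{a_t' a_r}$ seen from distinct transmit APs $a_t\neq a_t'$ are mutually independent and zero-mean.
\end{itemize}
Then $\E\{\mathbf{g}_{a_r}[\tau_1]\mathbf{g}_{a_r}^H[\tau_2]\,|\,\mathbf{X}\}$ splits into a noise term, a sum of per-$a_t$ clutter terms, and vanishing cross terms. The noise term gives $\sigma_{n,a_r}^2\delta(\tau_1-\tau_2)\mathbf{I}_{\Mr}$ directly.

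\emph{Per-AP clutter term.} Substituting \eqref{eq:clutter-channel}, we need
\[
\sigma_\text{clt}^2\,\mathbf{C}_{a_r}^{1/2}\,\E\{\mathbf{Q}[\tau_1]\mathbf{M}\,\mathbf{Q}^H[\tau_2]\}\,\mathbf{C}_{a_r}^{H/2},
\qquad
\mathbf{M}=\mathbf{C}_{a_t}^{1/2}\mathbf{x}_{a_t}[\tau_1]\mathbf{x}_{a_t}^H[\tau_2]\mathbf{C}_{a_t}^{H/2}.
\]
The key identity is $\E\{\mathbf{Q}[\tau_1]\mathbf{M}\mathbf{Q}^H[\tau_2]\} = \rho_\text{clt}(\tau_1-\tau_2)\tr(\mathbf{M})\,\mathbf{I}_{\Mr}$. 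It follows entrywise from $\E\{Q_{ij}[\tau_1]Q_{kl}^*[\tau_2]\}=\rho_\text{clt}(\tau_1-\tau_2)\delta_{ik}\delta_{jl}$, which is just the stated temporal autocorrelation $\E\{\mathbf{q}[\tau_1]\mathbf{q}^H[\tau_2]\}=\rho_\text{clt}(\tau_1-\tau_2)\mathbf{I}$ rewritten element by element. Equivalently, one can use the vectorized form of Proposition~\ref{prop:sensing-channel}: write $\mathbf{H}^\text{clt}\mathbf{x} = (\mathbf{x}^\top\otimes\mathbf{I}_{\Mr})\mathbf{h}^\text{clt}$, then apply the mixed-product rule for Kronecker products with $\E\{\mathbf{q}[\tau_1]\mathbf{q}^H[\tau_2]\}=\rho_\text{clt}(\tau_1-\tau_2)\mathbf{I}$.

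\emph{Trace simplification.} Using cyclicity, $\tr(\mathbf{M}) = \mathbf{x}_{a_t}^H[\tau_2]\mathbf{C}_{a_t}^{H/2}\mathbf{C}_{a_t}^{1/2}\mathbf{x}_{a_t}[\tau_1] = \mathbf{x}_{a_t}^H[\tau_2]\mathbf{C}_{a_t}\mathbf{x}_{a_t}[\tau_1]$, since $\mathbf{C}_{a_t}^{1/2}$ is the Hermitian PSD square root. The receive side collapses in the same way to $\mathbf{C}_{a_r}^{1/2}\mathbf{C}_{a_r}^{H/2}=\mathbf{C}_{a_r}$. Summing over $a_t$ and adding the noise term yields the stated formula.

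\emph{Conditioning.} The expression depends explicitly on the instantaneous $\mathbf{x}_{a_t}[\tau_1]$ and $\mathbf{x}_{a_t}[\tau_2]$, which shows that the conditional covariance depends strictly on the transmit signals. Setting $\tau_1=\tau_2$ with $\rho_\text{clt}(0)=1$ gives the per-symbol covariance.

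The main obstacle is the trace identity for $\E\{\mathbf{Q}\mathbf{M}\mathbf{Q}^H\}$. Its correctness depends on the white spatial structure of $\mathbf{Q}$ and on being consistent about the square-root and transpose conventions ($\mathbf{C}^{\top/2}$ in the vectorized form versus $\mathbf{C}^{1/2}$ in the matrix form). We will verify it once entrywise and then reuse it. The independence of clutter across distinct $a_t$ is not stated explicitly in the model and must be adopted as an assumption.
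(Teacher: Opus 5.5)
Your proposal is correct, and it is the natural direct computation. The paper omits its own proof, and the lemma's pointer to the vectorized autocorrelation of Proposition~\ref{prop:sensing-channel} suggests your alternative Kronecker route, which is equivalent to your entrywise identity $\E\{\mathbf{Q}[\tau_1]\mathbf{M}\mathbf{Q}^H[\tau_2]\}=\rho_\text{clt}(\tau_1-\tau_2)\tr(\mathbf{M})\mathbf{I}_{\Mr}$. The assumptions you flag are indeed needed for the cross terms to vanish and are left unstated in the paper: uncorrelated zero-mean scattering across distinct transmit APs, temporally white noise independent of the clutter, and, implicitly, independence of $\mathbf{X}$ from both so that conditioning does not change their statistics. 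Your use of the Hermitian square root is also what lets $\mathbf{C}_{a_t}^{H/2}\mathbf{C}_{a_t}^{1/2}$ and $\mathbf{C}_{a_r}^{1/2}\mathbf{C}_{a_r}^{H/2}$ collapse to $\mathbf{C}_{a_t}$ and $\mathbf{C}_{a_r}$.
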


\begin{proof}
The proof is omitted for brevity.
\end{proof}

A set of $T$ consecutive snapshots are collected at receive AP $a_r$ and stacked in the matrix $\mathbf{Y}_{a_r}^\text{(s)} = \mat(\{\mathbf{y}_{a_r}^\text{(s)}[\tau]\}_{\tau=0}^{T-1}) \in \C^{\Mr \times T}$. Define $\mathbf{G}_{a_r} = \mat(\{\mathbf{g}_{a_r}[\tau]\}_{\tau=0}^{T-1})$, $\boldsymbol{\gamma}_{a_r} = \vect(\{\gamma_{a_r}[\tau]\}_{\tau=0}^{T-1})$, the Doppler steering vector $\mathbf{b}(f_D) = \vect(\{e^{j\phi_D[\tau]}\}_{\tau=0}^{T-1})$, and $\mathbf{u}_{a_r}(f_D)= \boldsymbol{\gamma}_{a_r} \odot \mathbf{b}(f_D)$.
We can then write the stacked receive matrix as $\mathbf{Y}_{a_r}^\text{(s)} = s_t \zeta_t \mathbf{a}_{a_r}(\phi_t) \mathbf{u}_{a_r}^\top (f_D) + \mathbf{G}_{a_r}$.
The vectorized form of this matrix is
\begin{align}
    \mathbf{y}_\text{ST} \triangleq \vect(\mathbf{Y}_{a_r}^\text{(s)}) = s_t \zeta_t \mathbf{u}_{a_r}(f_D) \otimes \mathbf{a}_{a_r}(\phi_t) + \mathbf{g}_\text{ST},
\end{align}
where $\mathbf{g}_\text{ST} \triangleq \vect(\mathbf{G}_{a_r}) \in \C^{\Mr T}$ collects the aggregate space-time clutter plus noise.

To enable joint space-time adaptive processing (STAP), we define the space-time steering vector parametrized by the hypothesized azimuth angle of arrival $\phi$ and Doppler frequency $f$ as $\mathbf{v}_{a_r}(\phi, f)\triangleq \mathbf{u}_{a_r}(f)\otimes \mathbf{a}_{a_r}(\phi)$.
Given the conditional covariance $\mathbf{R}_{g_\text{ST}|\mathbf{X}} = \E\{\mathbf{g}_\text{ST} \mathbf{g}_\text{ST}^H \,|\, \mathbf{X} \}$, the canonical STAP filter matched to $(\phi,f)$ is obtained as
\begin{align}
    \mathbf{w}_\text{ST}(\phi, f) \triangleq \mathbf{R}_{g_\text{ST}|\mathbf{X}}^{-1} \mathbf{v}_{a_r}(\phi, f). 
\end{align}
Applying the STAP filter to $\mathbf{y}_\text{ST}$ yields the normalized 2D spatial-Doppler spectrum 
\begin{align}
    &\Lambda_{a_r}(\phi, f) = \mathbf{w}_\text{ST}^H(\phi, f)\, \mathbf{y}_\text{ST} \\
    & \ = s_t \zeta_t \mathbf{v}_{a_r}^H(\phi, f) \mathbf{R}_{g_\text{ST}|\mathbf{X}}^{-1} \mathbf{v}_{a_r}(\phi_t, f_D) + \mathbf{v}_{a_r}^H(\phi, f) \mathbf{R}_{g_\text{ST}|\mathbf{X}}^{-1} \mathbf{g}_\text{ST}. \nonumber
\end{align}
The joint maximum likelihood estimates of the target angle and Doppler can be found using
\begin{equation}
    \label{eq:STAP-MLE-estimation}
    (\widehat{\phi}_t, \widehat{f}_D) \triangleq \arg\max_{\phi \in \Phi, f\in\mathcal{F}} \frac{|\Lambda_{a_r}(\phi, f)|^2}{\mathbf{v}_{a_r}^H(\phi, f) \mathbf{R}_{g_\text{ST}|\mathbf{X}}^{-1} \mathbf{v}_{a_r}(\phi, f)} .
\end{equation}
Using the random parameter estimates $(\widehat{\phi}_t, \widehat{f}_D)$ makes derivation of an analytical SCNR expression intractable. Thus, to establish a performance upper bound, we derive the SCNR assuming perfect knowledge of the target's true parameters $(\phi_t, f_D)$. The performance degradation incurred by relying on STAP-derived estimates in~\eqref{eq:STAP-MLE-estimation} will be evaluated in the numerical results.
The conditional SCNR is thus obtained as
\begin{align}
     &\text{SCNR}_{a_r}(\phi_t, f_D \,|\, \mathbf{X)} \nonumber\\
     &= \frac{\E\{|\Lambda_{a_r}(\phi_t, f_D)|^2 \, |\, \mathbf{X}, \mathcal{H}_1 \} - \E\{|\Lambda_{a_r}(\phi_t, f_D)|^2 \, |\, \mathbf{X}, \mathcal{H}_0 \}}{\E\{|\Lambda_{a_r}(\phi_t, f_D)|^2 \, |\, \mathbf{X}, \mathcal{H}_0 \}} \nonumber\\
     &= \sigma_\text{RCS}^2 \mathbf{v}_{a_r}^H(\phi_t, f_D) \mathbf{R}_{g_\text{ST}|\mathbf{X}}^{-1} \mathbf{v}_{a_r}(\phi_t, f_D),
\end{align}
where $\mathcal{H}_1$ represents the hypothesis that an LoS path exists to target (i.e., $s_t = 1$), and $\mathcal{H}_0$ otherwise.
In the following proposition, we derive a tractable expression for the average SCNR, which will be used as the sensing metric in the CORDIS optimization framework.

\begin{proposition}
    \label{prop:expected-SCNR}
    For zero-mean, unit power, spatially and temporally white communication symbols, the space-time clutter-plus-noise covariance matrix $\mathbf{R}_{g_\textnormal{ST}} = \mathbb{E}_{\mathbf{s}}\{\mathbf{R}_{g_\textnormal{ST}|\mathbf{X}}\}$ is block diagonal, with zero slow-time Doppler correlation $\rho_\textnormal{clt}(\Delta\tau)$. The covariance is
    \begin{align}
        \mathbf{R}_{g_\textnormal{ST}} = \mathbf{I}_T \otimes \mathbf{R}_{g_{a_r}},
    \end{align}
    with per-symbol spatial covariance matrix $\mathbf{R}_{g_{a_r}}$ obtained as
    \begin{align}
        \mathbf{R}_{g_{a_r}} &= \E_{\mathbf{s}} \big\{\mathbf{R}_{g_{a_r}}[\tau \,|\, \mathbf{X}] \big\} \nonumber \\
        &= \sigma_\textnormal{clt}^2 \sum_{a_t \in \setAt} \tr\left(\mathbf{W}_{a_t}^H \mathbf{C}_{a_t} \mathbf{W}_{a_t} \right) \mathbf{C}_{a_r} + \sigma_{n,a_r}^2 \mathbf{I}_{\Mr},
    \end{align}
    with $\mathbf{R}_{g_{a_r}}[\tau \,|\, \mathbf{X}]$ given by Lemma~\ref{lemma:per-symbol-g-correlation}.
    Thus, the expected STAP SCNR decouples into the product of temporal processing and spatial processing gains, given by
    \begin{align}
    \label{eq:expedted-scnr}
    \overline{\textnormal{SCNR}}_{a_r} &= \E_{\mathbf{s}} \{\textnormal{SCNR}_{a_r}(\phi_t, f_D | \mathbf{X})\} \\
    &=\sigma_\textnormal{RCS}^2 T \sum_{a_t \in \mathcal{A}_t} \beta_{a_t a_r}^\textnormal{tgt} \left\|\mathbf{a}_{a_t}^H \mathbf{W}_{a_t} \right\|^2 \left( \mathbf{a}_{a_r}^H \mathbf{R}_{g_{a_r}}^{-1} \mathbf{a}_{a_r} \right). \nonumber
    \end{align}
\end{proposition}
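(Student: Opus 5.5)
The proof has three steps. First, average the conditional cross-time covariance of Lemma~\ref{lemma:per-symbol-g-correlation} over the symbols. Second, assemble the resulting space-time covariance. Third, evaluate the SCNR quadratic form, using the Kronecker structure and then averaging the target-dependent factor $\boldsymbol{\gamma}_{a_r}$.

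\textbf{Averaging the clutter term.} Substituting $\mathbf{x}_{a_t}[\tau]=\mathbf{W}_{a_t}\mathbf{s}[\tau]$ into Lemma~\ref{lemma:per-symbol-g-correlation}, the only symbol-dependent factor is
\begin{equation}
\mathbf{x}_{a_t}^H[\tau_2]\mathbf{C}_{a_t}\mathbf{x}_{a_t}[\tau_1]=\tr\big(\mathbf{W}_{a_t}^H\mathbf{C}_{a_t}\mathbf{W}_{a_t}\mathbf{s}[\tau_1]\mathbf{s}^H[\tau_2]\big).
\end{equation}
Since the symbols are zero-mean, unit-power and white in space and time, $\E\{\mathbf{s}[\tau_1]\mathbf{s}^H[\tau_2]\}=\delta(\tau_1-\tau_2)\mathbf{I}$. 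Hence every off-diagonal block vanishes, whatever $\rho_\textnormal{clt}(\Delta\tau)$ is. The diagonal blocks use $\rho_\textnormal{clt}(0)=1$ and give
\begin{equation}
\sigma_\textnormal{clt}^2\sum_{a_t}\tr(\mathbf{W}_{a_t}^H\mathbf{C}_{a_t}\mathbf{W}_{a_t})\mathbf{C}_{a_r}+\sigma_{n,a_r}^2\mathbf{I}_{\Mr},
\end{equation}
which is the same for every $\tau$. Stacking the blocks according to $\mathbf{g}_\textnormal{ST}=\vect(\mathbf{G}_{a_r})$ (time-major blocks of length $\Mr$) yields $\mathbf{R}_{g_\textnormal{ST}}=\mathbf{I}_T\otimes\mathbf{R}_{g_{a_r}}$.

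\textbf{Evaluating the SCNR.} I replace $\mathbf{R}_{g_\textnormal{ST}|\mathbf{X}}$ in the conditional SCNR by its symbol average $\mathbf{R}_{g_\textnormal{ST}}$. Then I apply the mixed-product rule to $\mathbf{v}_{a_r}=\mathbf{u}_{a_r}\otimes\mathbf{a}_{a_r}$:
\begin{equation}
\mathbf{v}_{a_r}^H(\mathbf{I}_T\otimes\mathbf{R}_{g_{a_r}}^{-1})\mathbf{v}_{a_r}=\|\mathbf{u}_{a_r}\|^2\,\mathbf{a}_{a_r}^H\mathbf{R}_{g_{a_r}}^{-1}\mathbf{a}_{a_r}.
\end{equation}
The Doppler entries have unit modulus, so $\|\mathbf{u}_{a_r}\|^2=\sum_\tau|\gamma_{a_r}[\tau]|^2$. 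Writing $\gamma_{a_r}[\tau]=\sum_{a_t}\sqrt{\beta^\textnormal{tgt}_{a_t a_r}}\,\mathbf{a}_{a_t}^H\mathbf{W}_{a_t}\mathbf{s}[\tau]$, we get
\begin{equation}
\E|\gamma_{a_r}[\tau]|^2=\Big\|\sum_{a_t}\sqrt{\beta^\textnormal{tgt}_{a_t a_r}}\,\mathbf{a}_{a_t}^H\mathbf{W}_{a_t}\Big\|^2 .
\end{equation}
Summing over $\tau$ supplies the factor $T$, so the spatial gain $\mathbf{a}_{a_r}^H\mathbf{R}_{g_{a_r}}^{-1}\mathbf{a}_{a_r}$ and the temporal gain $T$ separate, as claimed.

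\textbf{Main obstacle.} The statement has a sum over $a_t$ of individual norms, not the norm of a sum, so the cross-AP terms $\sqrt{\beta\beta'}\,\mathbf{a}_{a_t}^H\mathbf{W}_{a_t}\mathbf{W}_{a_t'}^H\mathbf{a}_{a_t'}$ must vanish. The symbol statistics alone do not kill them, since all APs share the same symbols. I would first try to justify their removal from the echo model: the target reflection coefficients (independent RCS $\zeta_t$ per AP pair, as noted in Proposition~\ref{prop:sensing-channel}) or the LoS phases seen from distinct transmit APs are independent. Averaging over these phases (or treating $\zeta_t$ per pair, so the target covariance is a sum over $a_t$) removes the cross terms and leaves $\sum_{a_t}\beta^\textnormal{tgt}_{a_t a_r}\|\mathbf{a}_{a_t}^H\mathbf{W}_{a_t}\|^2$, matching~\eqref{eq:expedted-scnr}.

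A second subtlety is the legitimacy of taking the expectation of the inverse, i.e.\ replacing $\E\{\mathbf{R}^{-1}_{g|\mathbf{X}}\}$ by $\mathbf{R}_{g_\textnormal{ST}}^{-1}$. I would state this as the design approximation in which the receiver uses the average covariance. Equivalently, it is the large-$\Mt$/ergodic regime, where $\mathbf{R}_{g_\textnormal{ST}|\mathbf{X}}$ concentrates around its mean. With these two modeling points accepted, the result follows by direct substitution.
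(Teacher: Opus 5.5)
Your first two steps are correct and are the natural route. Temporal whiteness of $\mathbf{s}[\tau]$ removes every off-diagonal block of Lemma~\ref{lemma:per-symbol-g-correlation} in expectation, whatever $\rho_{\textnormal{clt}}$ is. The $\vect$ ordering then gives $\mathbf{I}_T\otimes\mathbf{R}_{g_{a_r}}$, and the mixed-product rule reduces the quadratic form to $\|\mathbf{u}_{a_r}\|^2\,\mathbf{a}_{a_r}^H\mathbf{R}_{g_{a_r}}^{-1}\mathbf{a}_{a_r}$. The gap is in the two points you defer to ``modeling''. They are exactly where the claimed identity fails under the paper's own hypotheses, so they cannot be derived, and part of your justification for each is wrong.

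\emph{Cross-AP terms.} You are right that they survive. All APs send the same $\mathbf{s}[\tau]$ (the SINR in~\eqref{eq:sinr} combines APs coherently), the echo model pulls a single $\zeta_t$ out of the sum over $a_t$, and $\sqrt{\beta^{\textnormal{tgt}}_{a_t a_r}}$ is real. Hence $\mathbb{E}_{\mathbf{s}}|\gamma_{a_r}[\tau]|^2=\|\sum_{a_t}\sqrt{\beta^{\textnormal{tgt}}_{a_t a_r}}\,\mathbf{a}_{a_t}^H\mathbf{W}_{a_t}\|^2$. These cross terms do not vanish in general. For two APs illuminating $t$ with NS-C beams, the contribution of the shared stream $s_t$ is strictly positive, since $\mathbf{a}_{a_t}^H\widehat{\mathbf{P}}_{a_t}^{\perp}\mathbf{a}_{a_t}>0$. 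In the clutter term, it is the independence of $\mathbf{Q}_{a_t a_r}$ across $a_t$ that removes cross terms; the target term has no analogue, and Proposition~\ref{prop:sensing-channel} concerns a single pair and says nothing about independence across pairs. So the sum-of-norms form must be imposed as an explicit hypothesis.

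Of your two candidate hypotheses, only the phase one works as stated. Insert i.i.d.\ uniform phases $e^{j\psi_{a_t}}$ into $\gamma_{a_r}[\tau]$, known to the receiver so that $\mathbf{v}_{a_r}$ stays matched, and average over $\boldsymbol{\psi}$ jointly with $\mathbf{s}$. This gives the formula, because after your covariance replacement the SCNR is linear in $\|\mathbf{u}_{a_r}\|^2$ and $\mathbf{R}_{g_{a_r}}$ does not involve $\boldsymbol{\psi}$. The expectation is then $\mathbb{E}_{\mathbf{s},\boldsymbol{\psi}}$, not $\mathbb{E}_{\mathbf{s}}$.

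Per-pair RCS alone does not work. The target component becomes $\sum_{a_t}\zeta_{a_t a_r}\mathbf{e}_{a_t}$, where $\mathbf{e}_{a_t}=\sqrt{\beta^{\textnormal{tgt}}_{a_t a_r}}\,(\boldsymbol{\mu}_{a_t}\odot\mathbf{b})\otimes\mathbf{a}_{a_r}$ and $\mu_{a_t}[\tau]=\mathbf{a}_{a_t}^H\mathbf{x}_{a_t}[\tau]$. The paper's filter is matched to the aggregate $\mathbf{v}_{a_r}=\sum_{a_t}\mathbf{e}_{a_t}$, so the SCNR becomes $\sigma_{\textnormal{RCS}}^2\sum_{a_t}|\mathbf{v}_{a_r}^H\mathbf{R}_{g_{\textnormal{ST}}}^{-1}\mathbf{e}_{a_t}|^2/(\mathbf{v}_{a_r}^H\mathbf{R}_{g_{\textnormal{ST}}}^{-1}\mathbf{v}_{a_r})$. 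This still contains the cross products $\mathbf{e}_{a_t'}^H\mathbf{R}_{g_{\textnormal{ST}}}^{-1}\mathbf{e}_{a_t}$. You would also have to replace the paper's SCNR by the non-coherent metric $\tr(\mathbf{R}_{g_{\textnormal{ST}}}^{-1}\mathbf{R}^{\textnormal{tgt}})$ with $\mathbf{R}^{\textnormal{tgt}}=\sigma_{\textnormal{RCS}}^2\sum_{a_t}\mathbf{e}_{a_t}\mathbf{e}_{a_t}^H$; its symbol average is indeed the claimed expression.

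\emph{Averaged covariance.} The ``large-$\Mt$'' justification is wrong. For $\tau_1\neq\tau_2$, the block in Lemma~\ref{lemma:per-symbol-g-correlation} carries the scalar $\mathbf{s}^H[\tau_2]\mathbf{A}\,\mathbf{s}[\tau_1]$ with $\mathbf{A}=\sum_{a_t}\mathbf{W}_{a_t}^H\mathbf{C}_{a_t}\mathbf{W}_{a_t}\in\mathbb{C}^{|\mathcal{D}|\times|\mathcal{D}|}$. This scalar has zero mean but standard deviation $\|\mathbf{A}\|_F\geq\tr(\mathbf{A})/\sqrt{|\mathcal{D}|}$. So, in r.m.s.\ terms, each off-diagonal block is at least a $|\rho_{\textnormal{clt}}(\tau_1-\tau_2)|/\sqrt{|\mathcal{D}|}$ fraction of the clutter part of the diagonal block, for every $\Mt$ and $|\setAt|$. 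All APs share the same $|\mathcal{D}|=\Nue+\Nt$ symbols, so adding antennas or APs adds no averaging, and $\rho_{\textnormal{clt}}$ is near one for slowly varying clutter.

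Operator convexity, $\mathbb{E}\{\mathbf{R}_{g_{\textnormal{ST}}|\mathbf{X}}^{-1}\}\succeq\mathbf{R}_{g_{\textnormal{ST}}}^{-1}$, also fails to deliver the lower bound it suggests, because $\mathbf{v}_{a_r}$ is built from the same $\mathbf{s}$. Take $T=\Mr=1$, one AP and one Gaussian stream: the exact average is $\mathbb{E}\{f(|s|^2)\}$ with $f(x)\propto x/(cx+n)$ concave for $c,n>0$, hence strictly below the formula.

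What your computation proves exactly is the symbol average of $\sigma_{\textnormal{RCS}}^2\mathbf{v}_{a_r}^H\mathbf{R}_{g_{\textnormal{ST}}}^{-1}\mathbf{v}_{a_r}$, i.e., of the conditional SCNR with $\mathbf{R}_{g_{\textnormal{ST}}|\mathbf{X}}$ replaced by its mean. State the proposition as that definition together with the explicit non-coherence hypothesis, not as an identity for $\mathbb{E}_{\mathbf{s}}\{\textnormal{SCNR}_{a_r}(\phi_t,f_D\,|\,\mathbf{X})\}$.
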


\begin{proof}
    The proof is omitted for brevity.
\end{proof}


\section{CORDIS-Split Algorithm}
\label{sec:cordis-split}

The \cordis\ framework aims to develop distributed resource allocation algorithms that support decentralized processing with minimal fronthaul information exchange. In this section we introduce the \textit{\cordis-Split} algorithm, which decouples beamforming (BF) design from power allocation (PA) optimization by allowing each AP to locally construct its beamformers, while the PA variables are optimized centrally at the CPU.
This significantly reduces fronthaul overhead, although its applicability is inherently constrained by the rank and conditioning of the local channel matrix at each AP. To decouple BF and PA, we define a power splitting ratio (PSR) $\rho_{a_t}$ for each AP $a_t$ that determines the communication power $P_{a_t}^{(c)} = \rho_{a_t} \Pmax$ and sensing power by $P_{a_t}^{(s)} = (1-\rho_{a_t}) \Pmax$.
We also define the local downlink channel matrix from AP $a_t$ to all users as $\mathbf{H}_{a_t} = [\mathbf{h}_{a_t 1}, \cdots, \mathbf{h}_{a_t \Nue}]^\top \in \C^{\Nue \times \Mt}$.
Let $\mathbf{y}_\text{DL}[\tau]$ denote the aggregate noise-free downlink signals received by all users at symbol time $\tau$:
\begin{equation}
    \mathbf{y}_\text{DL}[\tau] = \sum_{a_t\in \setAt} \mathbf{H}_{a_t} \mathbf{W}_{a_t} \mathbf{s}[\tau] \triangleq \mathbf{F}\, \mathbf{s}[\tau] \in \C^{\Nue},
\end{equation}
where $\mathbf{F} \triangleq \sum_{a_t\in \setAt} \mathbf{H}_{a_t} \mathbf{W}_{a_t} \in \C^{\Nue \times (\Nue + \Nt)}$.
For full-rank $\mathbf{H}_{a_t}$ and appropriately designed beamformers $\mathbf{W}_{a_t}$, we can ideally approach $\mathbf{F}^* = \left[\mathbf{I}_{\Nue} \, \middle| \, \mathbf{0}\right]$, implying that each user receives only its intended stream without MUI and S2CI.

\subsection{Distributed Robust Beamforming at APs}
\label{subsec:split-distributed-bf}

To enable decentralized processing, we decompose the global ideal transformation $\mathbf{F}^*$ into local reference matrices for each transmit AP,
$\mathbf{F}_{a_t}^* = \alpha_{a_t} \mathbf{F}^*$, where $\alpha_{a_t}$ is a weight assigned based on local channel quality with $\sum_{a_t\in\setAt}\alpha_{a_t}=1$.
To prevent APs with highly correlated or deeply faded channels from exhausting their power budgets, we define $\alpha_{a_t} = \eta(a_t) / \sum_{b_t \in \setAt} \eta(b_t)$, where $\eta(a_t)$ represents the inverse condition number of the estimate $\widehat{\mathbf{H}}_{a_t} = [\widehat{\mathbf{h}}_{a_t 1}, \dots, \widehat{\mathbf{h}}_{a_t \Nue}]^\top$ and $\widehat{\mathbf{h}}_{a_t u}$ are defined in Proposition~\ref{prop:channel-estimation}.
Thus, each AP locally aims to solve the sub-problem $\mathbf{H}_{a_t} \mathbf{W}_{a_t} \approx \mathbf{F}_{a_t}^*$ for $\mathbf{W}_{a_t}$, where $\mathbf{H}_{a_t} = \widehat{\mathbf{H}}_{a_t} + \widetilde{\mathbf{H}}_{a_t}$, and $\widetilde{\mathbf{H}}_{a_t}$ represents the estimation error.

To achieve $\mathbf{H}_{a_t} \mathbf{W}_{a_t}^{(c)} \simeq \alpha_{a_t} \mathbf{I}_{\Nue}$, we define the regularized least-squares problem 
\begin{align}
    \min_{\mathbf{W}_{a_t}^{(c)}} \E_{\widetilde{\mathbf{H}}}\left\{\|(\widehat{\mathbf{H}}_{a_t} + \widetilde{\mathbf{H}}_{a_t}) \mathbf{W}_{a_t}^{(c)} - \alpha_{a_t} \mathbf{I}_{\Nue}\|_F^2 \right\} + \varepsilon \|\mathbf{W}_{a_t}^{(c)}\|_F^2, \nonumber
\end{align}
which provides a robust local MMSE precoder.
By directly incorporating the linear MMSE error covariance $\widetilde{\mathbf{R}}_{\mathbf{H}_{a_t}} = \sum_{u=1}^{\Nue} \widetilde{\mathbf{R}}_{a_t u}$ into the regularized inversion, the local AP shapes its communication beams to proactively suppress transmission in spatial directions characterized by high channel uncertainty.
The resulting unnormalized precoder is given by
\begin{align}
    \label{eq:split-comm-bf}
    \widetilde{\mathbf{W}}_{a_t}^{(c)} = \left( \widehat{\mathbf{H}}_{a_t}^H \widehat{\mathbf{H}}_{a_t} + \widetilde{\mathbf{R}}_{\textbf{H}_{a_t}} + \varepsilon \mathbf{I}_{\Mt} \right)^{-1} \widehat{\mathbf{H}}_{a_t}^H (\alpha_{a_t} \mathbf{I}_{\Nue}),
\end{align}
where $\varepsilon$ is the regularization parameter.
Defining $\widehat{\mathbf{W}}_{a_t}^{(c)} = \widetilde{\mathbf{W}}_{a_t}^{(c)} / \|\widetilde{\mathbf{W}}_{a_t}^{(c)}\|_F$, the communication BF becomes $\mathbf{W}_{a_t}^{(c)} = \sqrt{\rho_{a_t} \Pmax}\, \widehat{\mathbf{W}}_{a_t}^{(c)}$.
We refer to this method as local robust MMSE (LR-MMSE).

For sensing, transmit AP $a_t$ projects the target steering vectors onto the null space of its estimated communication channel matrix using
\begin{align}
    \widehat{\mathbf{P}}_{a_t}^{\perp} = \mathbf{I}_{\Mt} - \widehat{\mathbf{H}}_{a_t}^{H} \left(\widehat{\mathbf{H}}_{a_t} \widehat{\mathbf{H}}_{a_t}^H + \epsilon \mathbf{I}_{\Nue}\right)^{-1} \widehat{\mathbf{H}}_{a_t},
\end{align}
where $\epsilon$ is chosen to ensure invertibility.
For each target $t \in \setT_{a_t}$, the unnormalized projected sensing vector is $\widetilde{\mathbf{w}}_{a_t t} = \widehat{\mathbf{P}}_{a_t}^{\perp} \mathbf{a}_{a_t}(\phi_t)$. Rather than forcing the local AP to heuristically suppress the inevitable S2CI due to CSI estimation errors, which would compromise the sensing gain, \cordis-Split delegates management of the residual S2CI due to CSI estimation errors to the CPU during the power allocation phase.

To illuminate multiple targets without increasing the dimensionality of the centralized optimization, allocation of the sensing power budget among the targets must be resolved locally at each AP. An equal-power baseline allocation ignores both the varying mission-criticality of targets and the heterogeneous spatial conflicts with communication UEs.
To address this, we introduce a priority-aware projection allocation strategy.
Let $\omega_t \geq 0$ denote a system-defined priority weight dictating the required sensing QoS for target $t$.
We design a local target power fraction $\lambda_{a_t t}$ that is proportional to both the target's priority and the squared norm of its null-space projection:
\begin{align}
    \lambda_{a_t t} = \frac{\omega_t \|\widetilde{\mathbf{w}}_{a_t t}\|^2}{\sum_{j\in\mathcal{T}_{a_t}} \omega_j \|\widetilde{\mathbf{w}}_{a_t j}\|^2}.
\end{align}
This strategy acts as a priority-weighted spatial water-filling mechanism.
By scaling with $\|\widetilde{\mathbf{w}}_{a_t t}\|^2$, the AP naturally avoids forcing sensing power toward targets whose spatial signatures are heavily aligned with the estimated communication subspace, inherently minimizing the risk of severe S2CI.
The normalized sensing beamformer for target $t$ is then $\widehat{\mathbf{w}}_{a_t t}^{(s)} = \sqrt{\lambda_{a_t t}}\, \widetilde{\mathbf{w}}_{a_t t} / \|\widetilde{\mathbf{w}}_{a_t t}\|$, yielding the aggregate sensing matrix $\widehat{\mathbf{W}}_{a_t}^{(s)} = [ \widehat{\mathbf{w}}_{a_t 1}^{(s)}, \dots, \widehat{\mathbf{w}}_{a_t |\mathcal{T}_{a_t}|}^{(s)} ]$ such that $\|\widehat{\mathbf{W}}_{a_t}^{(s)}\|_F = 1$.
After applying the centralized PSR, the sensing BF is $\mathbf{W}_{a_t}^{(s)} = \sqrt{(1-\rho_{a_t}) \Pmax} \, \widehat{\mathbf{W}}_{a_t}^{(s)}$.
We refer to this method as null-space conjugate (NS-C) beamforming.

\subsection{Centralized Power Allocation at the CPU}
\label{subsec:split-centralized-pa}

Following local beamformer design, the CPU optimizes the PSRs $\boldsymbol{\rho} = \col(\{\rho_{a_t}\}_{a_t\in\setAt})$ to maximize sensing performance under communication QoS constraints based on the conditional SINR in~\eqref{eq:sinr}:
\begin{align}
    \label{eq:pa-comm-utility}
    &\widehat{\text{SINR}}_u(\widehat{\mathbf{H}}, \widehat{\mathbf{W}}) \triangleq \frac{S_u(\boldsymbol{\rho})}{D_u(\boldsymbol{\rho})} = \nonumber \\
    &= \frac{|\sum_{a_t\in\setAt} \sqrt{\rho_{a_t}} \widehat{\beta}_{a_t u}|^2}{\sum\limits_{a_t\in\setAt} \left[ \rho_{a_t} (\widehat{g}_{a_t u} + e_{a_t u}^{(c)}) + (1-\rho_{a_t})e_{a_t u}^{(s)}\right] + \frac{\sigma_{n,u}^2}{\Pmax}},
\end{align}
where $\widehat{\beta}_{a_t u} = \widehat{\mathbf{h}}_{a_t u}^H \widehat{\mathbf{w}}_{a_t u}^{(c)}$ denotes the effective signal channel, $\widehat{g}_{a_t u} = \sum_{k\in\setU\setminus u} |\widehat{\mathbf{h}}_{a_t u}^H \widehat{\mathbf{w}}_{a_t k}^{(c)}|^2$ the residual MUI,
and $e_{a_t u}^{(c)} = \tr\left(\widehat{\mathbf{W}}_{a_t}^{(c)}{}^H \widetilde{\mathbf{R}}_{a_t u} \widehat{\mathbf{W}}_{a_t}^{(c)} \right)$ and $e_{a_t u}^{(s)} = \tr\left(\widehat{\mathbf{W}}_{a_t}^{(s)}{}^H \widetilde{\mathbf{R}}_{a_t u} \widehat{\mathbf{W}}_{a_t}^{(s)} \right)$ are CSI error terms.
To preserve low fronthaul overhead, we define $\widetilde{g}_{a_t u} = \widehat{g}_{a_t u} + e_{a_t u}^{(c)}$ and require each transmit AP to compute and forward only the three real-valued scalars $\{\widehat{\beta}_{a_t u}, \widetilde{g}_{a_t u}, e_{a_t u}^{(s)}\}$ for each user to the CPU.

\begin{algorithm}[t]
\caption{CORDIS-Split (DistBF-CentPA)}
\label{alg:cordis-split}
\begin{algorithmic}[1]

\REQUIRE
    Channel estimates $\widehat{\mathbf{H}}_{a_t}$, error covariances $\widetilde{\mathbf{R}}_{a_t u}$, steering vectors $\mathbf{a}_{a_t t}$, QoS thresholds $\gamma_u$, penalty $\kappa$.

\STATE \textbf{Initialization:}
    \STATE \quad Send $\eta(a_t)$ to CPU and receive $\sum_{a_t^\prime} \eta(a_t^\prime)$ in return.

\STATE \textbf{Phase I: Distributed BF at each AP $a_t\in\setAt$}
    \STATE \quad Perform LR-MMSE beamforming (\ref{subsec:split-distributed-bf}) for UEs.
    \STATE \quad Perform NS-C beamforming (\ref{subsec:split-distributed-bf}) for target(s).
    \STATE \quad Measure $\{\widehat{\beta}_{a_t u}, \widetilde{g}_{a_t u}, e_{a_t u}^{(s)}\}_{u \in \mathcal{U}}$, $z_{a_t}$, and $\widetilde{q}_{a_t}$.
    \STATE \quad Share these parameters with the CPU.

\STATE \textbf{Phase II: Centralized PA at the CPU}
    \STATE \quad Receive compressed scalars from all transmit APs.
    \STATE \quad Solve problem~\eqref{eq:split-pa-opt} using interior-point methods.~\label{state:solve-pa-opt}
    \STATE \quad Send optimal PSRs $\{\rho_{a_t}^*\}$ to APs $a_t\in \setAt$.
    
\end{algorithmic}
\end{algorithm}

While the expected SCNR is the ultimate metric for target detection, directly maximizing the exact SCNR with respect to $\boldsymbol{\rho}$ yields a highly non-convex fractional programming problem. Instead, we assume the CPU optimizes sensing using a linear surrogate that exploits the separable Kronecker structure of the clutter model, maximizing the priority-weighted target echo power while actively penalizing the total transmit clutter illumination. The echo power for target $t$ at AP $a_r$ is
\begin{multline}
    P_{a_r t}^\text{echo} = T \Pmax \, \sigma_{\text{RCS}, t}^2 \sum_{a_t\in\setAt} \beta_{a_t a_r}^t \cdot \\ 
    \left[\rho_{a_t} \|\mathbf{a}_{a_t}^H(\phi_t) \widehat{\mathbf{W}}_{a_t}^{(c)}\|^2 + (1-\rho_{a_t}) \|\mathbf{a}_{a_t}^H(\phi_t) \widehat{\mathbf{W}}_{a_t}^{(s)}\|^2 \right],
\end{multline}
where $\mathbf{a}_{a_t t} \triangleq \mathbf{a}_{a_t}(\phi_t)$. The sensing utility is defined as
\begin{align}
    U_{cpu}^\text{sens}\left(\boldsymbol{\rho}\right) = \sum_{a_t\in\setAt} \rho_{a_t} \left[z_{a_t} - \kappa \, (q_{a_t}^{(c)} - q_{a_t}^{(s)})\right],
\end{align}
where $\kappa \geq 0$ is a CPU-determined regularization and
\begin{align*}
    z_{a_t} &= \sum_{t, a_r} \omega_t \sigma_{\text{RCS},t}^2 \beta_{a_t a_r}^{t} \left(\|\mathbf{a}_{a_t t}^H \widehat{\mathbf{W}}_{a_t}^{(c)}\|^2 - \|\mathbf{a}_{a_t t}^H \widehat{\mathbf{W}}_{a_t}^{(s)}\|^2 \right) \\
q_{a_t}^{(c)} &= \tr\left( \widehat{\mathbf{W}}_{a_t}^{(c)}{}^H \mathbf{C}_{a_t} \widehat{\mathbf{W}}_{a_t}^{(c)} \right) \; , \;
q_{a_t}^{(s)} = \tr\left( \widehat{\mathbf{W}}_{a_t}^{(s)}{}^H \mathbf{C}_{a_t} \widehat{\mathbf{W}}_{a_t}^{(s)} \right).
\end{align*}
Each AP forwards only the two real-valued scalars $z_{a_t}$ and $\widetilde{q}_{a_t} = (q_{a_t}^{(c)} - q_{a_t}^{(s)})$ to the CPU to form the global objective.

To ensure feasibility under restrictive QoS constraints, we introduce non-negative slack variables  $\boldsymbol{\varepsilon} = \col(\{\varepsilon_u\}_{u\in\setU})$.
Defining $c_{a_t} \triangleq z_{a_t} - \kappa \, \widetilde{q}_{a_t}$ and $\mathbf{c} = \col(\{c_{a_t}\}_{a_t\in\setAt})$, the centralized PA optimization problem is defined as
\begin{subequations}
\label{eq:split-pa-opt}
\begin{align}
    \textbf{P-Split:} \ \max_{\boldsymbol{\rho},\, \boldsymbol{\varepsilon} \geq \mathbf{0}} \quad & \mathbf{c}^\top \boldsymbol{\rho} - \xi \, \mathbf{1}_{\Nue} \boldsymbol{\varepsilon} \label{eq:split-pa-opt-obj} \\
    \text{s.t.} \quad & S_u(\boldsymbol{\rho}) + \varepsilon_u \geq \gamma_u D_u(\boldsymbol{\rho}), \ \forall u \in \mathcal{U}, \label{eq:split-pa-opt-const-sinr} \\ 
    & 0 \leq \rho_{a_t} \leq 1, \ \forall a_t \in \mathcal{A}_t, 
\label{eq:split-pa-opt-const-box}
\end{align}
\end{subequations}
where $\xi \gg 0$ penalizes QoS violations and $\gamma_u$ is the minimum required QoS $\widehat{\text{SINR}}_u$ defined in~\eqref{eq:pa-comm-utility}. The objective \eqref{eq:split-pa-opt-obj} and bounds \eqref{eq:split-pa-opt-const-box} are affine, and $S_u(\boldsymbol{\rho})$ is strictly concave. Thus, \eqref{eq:split-pa-opt-const-sinr} lower-bounds a concave function by an affine one, defining a convex feasible set that is solvable via standard interior-point methods. The \cordis-Split implementation is summarized in Algorithm~\ref{alg:cordis-split}.


\section{CORDIS-ADMM Algorithm}
\label{sec:cordis-admm}

\cordis-Split minimizes fronthaul signaling through spatial decoupling, which inherently limits its performance particularly when the local AP channels are rank-deficient or ill-conditioned.
To overcome this limitations, we introduce the \textit{\cordis-ADMM} algorithm in this section.
In contrast to \cordis-Split, \cordis-ADMM jointly optimizes beamforming and power allocation through an iterative consensus ADMM procedure~\cite{Boyd2011Book, Zafari2024ADMM, Zafari2025confAsilomar}, allowing the APs to coordinate their local decisions while relying only on local imperfect CSI and limited information exchange.
This design is particularly attractive when the number of UEs exceeds the spatial degrees of freedom available at individual APs, or when stronger coordination is required to balance communication and sensing objectives. The resulting algorithm preserves the distributed spirit of the \cordis\ framework while providing a more powerful mechanism for joint optimization. 

\subsection{Global Optimization Problem}
\label{subsec:admm-global-opt}

The ultimate objective of the cell-free ISAC network can be formulated as 
\begin{subequations}
\label{eq:admm-p-global}
\begin{align}
    \textbf{P-Global:} \ \max_{\{\mathbf{W}_{a_t}\}} \quad & \sum_{a_r\in\setAr} \sum_{t\in\setT_{a_r}} \omega_t \,\overline{\text{SCNR}}_{a_r,t} \label{eq:admm-p-global-obj}\\
    \text{s.t.} \quad & \text{SINR}_u(\widehat{\mathbf{H}}) \geq \gamma_u, \ \forall u\in\setU, \label{eq:admm-p-global-sinr}\\
    & \|\mathbf{W}_{a_t}\|_F^2 \leq \Pmax, \ \forall a_t\in\setAt, \label{eq:admm-p-global-power}
\end{align}
\end{subequations}
where $\omega_t$ is the target priority weight, $\overline{\text{SCNR}}_{a_r,t}$ is given by~\eqref{eq:expedted-scnr}, and $\text{SINR}_u(\widehat{\mathbf{H}})$ by~\eqref{eq:sinr}.
Directly optimizing \eqref{eq:admm-p-global-obj} in a distributed framework is challenging since the SCNR denominator relies on $\mathbf{R}_{g_{a_r}}^{-1}$, which is a coupled function of the total clutter illumination generated by all APs.
Consequently, \eqref{eq:admm-p-global-obj} is a highly non-convex sum-of-fractions problem.
While such fractional programs can in principle be decoupled using advanced optimization techniques, applying them to a cell-free network requires iteratively freezing the coupled denominator via auxiliary variables, which introduces a computationally prohibitive triple-loop algorithm and induces gradient blindness, severing the explicit clutter penalty.

As with \cordis-Split, to guarantee stability and preserve the low fronthaul overhead, 
we introduce a clutter-aware linear surrogate for sensing defined as the aggregate priority-weighted target echo power minus a regularized penalty for the total transmit clutter illumination:
\begin{align}
    \label{eq:admm-linear-objective}
    &U_{cpu}^\text{sens}\left(\mathcal{W}\right) = \sum_{a_r\in\setAr} \sum_{t\in\setT_{a_r}} \omega_t P_{a_r t}^\text{echo} - \kappa P^\text{clutter} \\
    & \quad = \sum_{a_t\in\setAt} \left[\sum_{t\in\setT} \overline{\omega}_t \overline{\beta}_{a_t}^t \|\mathbf{a}_{a_t t}^H \mathbf{W}_{a_t}\|^2 - \kappa \tr\left(\mathbf{W}_{a_t}^H \mathbf{C}_{a_t} \mathbf{W}_{a_t}\right)\right], \nonumber
\end{align}
where $\mathcal{W} = \{\mathbf{W}_{a_t}\}_{a_t\in\setAt}$ denotes the set of all transmit beamformers, $\kappa$ is the global clutter penalty known to all APs, $\overline{\omega}_t = \omega_t T \sigma_{\text{RCS},t}^2$, and $\overline{\beta}_{a_t}^t = \sum_{a_r\in\setAr} \beta_{a_t a_r}^t$.
Defining $U_{cpu}^\text{sens}\left(\mathcal{W}\right) \triangleq \sum_{a_t\in\setAt} U_{a_t}^\text{sens}(\mathbf{W}_{a_t})$, the local utility $U_{a_t}^\text{sens}(\mathbf{W}_{a_t})$ can be optimized at each AP $a_t$.

\subsection{SCA-SOCP Convex Relaxation}
\label{subsec:admm-sca-socp}

Maximizing the convex target echo power $\|\mathbf{a}_{a_t t}^H \mathbf{W}_{a_t}\|^2$ in~\eqref{eq:admm-linear-objective} is a non-convex problem, so
we apply successive convex approximation (SCA) using a first-order Taylor expansion around the previous iteration's beamforming matrix $\mathbf{W}_{a_t}^{(n)}$ to define the affine lower-bound 
\begin{multline}
    \|\mathbf{a}_{a_t t}^H \mathbf{W}_{a_t}\|^2 \geq 2\, \Re\left\{\tr\left(\big(\mathbf{W}_{a_t}^{(n)}\big)^H\mathbf{a}_{a_t t} \mathbf{a}_{a_t t}^H \mathbf{W}_{a_t}\right)\right\} \\ - \|\mathbf{a}_{a_t t}^H \mathbf{W}_{a_t}^{(n)}\|^2.
\end{multline}
Substituting this into~\eqref{eq:admm-linear-objective} and removing constants, the modified local sensing utility is 
\begin{align}
    \label{eq:admm-local-sensing-utility}
    &\widetilde{U}_{a_t}^\text{sens}\big(\mathbf{W}_{a_t}^{(n)}; \mathbf{W}_{a_t}\big) =  \sum_{t\in\setT} \overline{\omega}_t \overline{\beta}_{a_t}^t \cdot \\ 
    & 2 \Re\left\{\tr\left(\big(\mathbf{W}_{a_t}^{(n)}\big)^H\mathbf{a}_{a_t t} \mathbf{a}_{a_t t}^H \mathbf{W}_{a_t}\right)\right\} - \kappa \tr\left(\mathbf{W}_{a_t}^H \mathbf{C}_{a_t} \mathbf{W}_{a_t}\right), \nonumber
\end{align}
which is a concave function of $\mathbf{W}_{a_t}$ and can be maximized.

The remaining challenge lies in the QoS constraint in~\eqref{eq:admm-p-global-sinr}, which 
is inherently non-convex due to its fractional form and the coupling of quadratic interference terms. To address this, we first
take the square root of both sides
\begin{align}
    \textstyle \sqrt{P_\text{Interf} + \sigma_{n,u}^2} \leq \frac{1}{\sqrt{\gamma_u}} \left|\sum_{a_t\in\setAt} \widehat{\mathbf{h}}_{a_t u} \mathbf{w}_{a_t u}\right|,
\end{align}
where $P_\text{Interf} = P_\text{MUI} + P_\text{S2CI} + P_\text{CSI-Error}$ is the total interference power.
Next, we aggregate the MUI terms $\sum_{a_t} \widehat{\mathbf{h}}_{a_t u}^H \mathbf{w}_{a_t k}$ for all $k\in\setU\setminus u$, the S2CI terms $\sum_{a_t} \widehat{\mathbf{h}}_{a_t u}^H \mathbf{w}_{a_t t}$ for all $t\in\setT$, the CSI error vector $\vect(\widetilde{\mathbf{R}}_{a_t u}^{1/2} \mathbf{W}_{a_t})$ for all $a_t\in\setAt$, and the noise power $\sigma_{n,u}$ into a single stacked vector $\mathbf{v}_u\big(\{\mathbf{W}_{a_t}\}\big)$.
Since the square root of the interference power equals the $\ell_2$-norm $\|\mathbf{v}_u\|_2$, we transform~\eqref{eq:admm-p-global-sinr} into a standard, strictly convex Second-Order Cone (SOC) constraint:
\begin{equation}
\label{eq:socp_constraint}
    \textstyle \|\mathbf{v}_u({\mathbf{W}_{a_t}})\|_2 \le \frac{1}{\sqrt{\gamma_u}} \Re\left\{ \sum_{a_t\in\mathcal{A}_t} \widehat{\mathbf{h}}_{a_t u}^H \mathbf{w}_{a_t u} \right\}.
\end{equation}
Although convex, the variables in \eqref{eq:socp_constraint} remain inextricably coupled across all transmit APs $a_t \in \mathcal{A}_t$.
Satisfying this constraint locally is impossible, necessitating the introduction of a decentralized consensus approach.

\subsection{Local Problems and Consensus ADMM}
\label{subsec:admm-consensus-admm}

Define the local contribution vector for AP $a_t$ and user $u$
\begin{align}
    \label{eq:admm-local-contribution}
    \mathbf{l}_{a_t u}(\mathbf{W}_{a_t}) = \big[x_{a_t u}, (\mathbf{i}_{a_t u}^{(c)})^\top, (\mathbf{i}_{a_t u}^{(s)})^\top, e_{a_t u}\big]^\top.
\end{align}
that contains the local desired signal gain $x_{a_t u} = \widehat{\mathbf{h}}_{a_t u}^H \mathbf{w}_{a_t u}$, local MUI vector $\mathbf{i}_{a_t u}^{(c)} = \col(\{\widehat{\mathbf{h}}_{a_t u}^H \mathbf{w}_{a_t k}\}_{k\neq u})$, local S2CI vector $\mathbf{i}_{a_t u}^{(s)} = \col(\{\widehat{\mathbf{h}}_{a_t u}^H \mathbf{w}_{a_t t}\}_{t\in\setT})$, and local CSI error bound $e_{a_t u} = \|\widetilde{\mathbf{R}}_{a_t u}^{1/2} \mathbf{W}_{a_t}\|_F^2$.
To enforce the globally coupled SOC constraint at the CPU, we introduce a global consensus vector $\mathbf{z}_u$ that encapsulates the signals that arrive at UE $u$, 
\begin{align}
    \mathbf{z}_u = \big[z_u^\text{CDS}, (\mathbf{z}_u^\text{MUI})^\top, (\mathbf{z}_u^\text{S2CI})^\top, z_u^\text{err}\big]^\top,
\end{align}
subject to the consensus requirement $\mathbf{z}_u = \sum_{a_t\in\setAt} \mathbf{l}_{a_t u}$.
The augmented Lagrangian for the global problem incorporates the separable SCA sensing objectives and explicitly penalizes the $\ell_2$-norm difference between the aggregate local AP contributions and the global consensus variables:
\begin{multline}
    \label{eq:admm-augmented-Lagrangian}
    \textstyle \mathcal{L}_\rho \big(\{\mathbf{W}_{a_t}\}, \{\mathbf{z}_u\}, \{\boldsymbol{\nu}_u\}\big) = \sum_{a_t\in\setAt} \widetilde{U}_{a_t}^\text{sens}\big(\mathbf{W}_{a_t}^{(n)}; \mathbf{W}_{a_t}\big) \\
    \textstyle - \frac{\rho}{2} \sum_{u\in\setU} \left\|\sum_{a_t\in\setAt} \mathbf{l}_{a_t u} (\mathbf{W}_{a_t}) - \mathbf{z}_u + \boldsymbol{\nu}_u\right\|_2^2, 
\end{multline}
where $\boldsymbol{\nu}_u$ is the dual variable and $\rho > 0$ is the standard ADMM penalty parameter dictating the convergence step size.

During the update step, each transmit AP optimizes its precoding matrix $\mathbf{W}_{a_t}$ to maximize its portion of the augmented Lagrangian, assuming the variables from other APs ($b_t \neq a_t$) and the CPU are fixed.
Let $\mathbf{\Sigma}_{a_t u}^{(n)} \triangleq \sum_{b_t \neq a_t} \mathbf{l}_{b_t u}^{(n)} - \mathbf{z}_u^{(n)} + \boldsymbol{\nu}_u^{(n)}$ denote the residual at iteration $n$ passed to AP $a_t$.
The local subproblem evaluated at transmit AP $a_t$ is
\begin{subequations}
\label{eq:admm-p-local}
\begin{align}
    \textbf{P-Local:}  \label{eq:admm-p-local-obj} \\ \max_{\mathbf{W}_{a_t}} \ \ &  \textstyle \widetilde{U}_{a_t}^\text{sens}\big(\mathbf{W}_{a_t}^{(n)}; \mathbf{W}_{a_t}\big)
     - \frac{\rho}{2} \sum\limits_{u \in \mathcal{U}} \left\| \mathbf{l}_{a_t, u}(\mathbf{W}_{a_t}) + \mathbf{\Sigma}_{a_t, u}^{(n)} \right\|_2^2   \nonumber \\
    \text{s.t.} \ \ & \|\mathbf{W}_{a_t}\|_F^2 \le P_{\max}. \label{eq:admm-p-local-const}
\end{align}
\end{subequations}
Since the objective \eqref{eq:admm-p-local-obj} combines a linear-quadratic (SCA) sensing utility, defined in~\eqref{eq:admm-local-sensing-utility}, with a negative quadratic ADMM penalty, the function is strictly concave.
Coupled with \eqref{eq:admm-p-local-const}, the local problem~\eqref{eq:admm-p-local} is a Quadratically Constrained Quadratic Program (QCQP), which can be efficiently solved using standard local interior-point methods.

After receiving the updated local contribution vectors $\mathbf{l}_{a_t u}^{(n+1)}$ from all APs, the CPU updates the global consensus vector $\mathbf{z}_u$ for each UE.
To represent the SOC constraint, define
\begin{align}
    \text{SOC}(\mathbf{z}_u) = \sqrt{\|\mathbf{z}_u^\text{MUI}\|_2^2 + \|\mathbf{z}_u^\text{S2CI}\|_2^2 + z_u^\text{err} + \sigma_{n,u}^2} - \frac{\Re\{z_u^\text{CDS}\}}{\sqrt{\gamma_u}}. \nonumber
\end{align}
Since the global consensus variables $\mathbf{z}_u$ and their SOC constraints are decoupled across UEs, the global update decomposes into $\Nue$ independent SOCP projections. Thus, the central consensus problem at the CPU is formulated as
\begin{subequations}
\label{eq:admm-p-central}
\begin{align}
    \textbf{P-Central:} \ \min_{\mathbf{z}_{u}, \varepsilon_u} \ & \textstyle \left\|\sum_{a_t} \mathbf{l}_{a_t u}^{(n+1)} + \boldsymbol{\nu}_u^{(n)} - \mathbf{z}_u\right\|_2^2 + \xi \varepsilon_u \label{eq:admm-p-central-obj}\\
    \text{s.t.} \ & \text{SOC}(\mathbf{z}_u) \leq \varepsilon_u, \label{eq:admm-p-central-const-1}\\
    & \Im\big\{z_u^\text{CDS}\big\} = 0, \ z_u^\text{err} \geq 0, \label{eq:admm-p-central-const-2}
\end{align}
\end{subequations}
where $\varepsilon_u > 0$ is a slack variable penalized by $\xi \gg 0$ and used to guarantee algorithmic stability.
This problem is a simple, low-dimensional projection of $\mathbf{z}_u$ onto a cone defined by \eqref{eq:admm-p-central-const-1}-\eqref{eq:admm-p-central-const-2}.
The CPU computes this projection for each UE in parallel and updates the dual multiplier via standard dual ascent.
The complete procedure of the \cordis-ADMM algorithm is summarized in Algorithm~\ref{alg:cordis-admm}.
For the stopping criterion, the primal and dual residuals are evaluated as
\begin{align}
    \textstyle r_\text{pri}^{(n+1)} & \textstyle = \sum_{u\in\setU} \|\sum_{a_t} \mathbf{l}_{a_t u}^{(n+1)} - \mathbf{z}_u^{(n+1)}\|_2 \leq \epsilon_\text{pri}, \label{eq:admm-primal-residual} \\
    \textstyle r_\text{dual}^{(n+1)} & \textstyle = \sum_{u\in\setU} \rho \, \|\mathbf{z}_u^{(n+1)} - \mathbf{z}_u^{(n)}\|_2 \leq \epsilon_\text{dual} \label{eq:admm-dual-residual}.
\end{align}

\begin{algorithm}[t]
\caption{\cordis-ADMM (Joint BF-PA)}
\label{alg:cordis-admm}
\begin{algorithmic}[1]

\REQUIRE
    Channel estimates $\widehat{\mathbf{H}}_{a_t}$, error covariances $\widetilde{\mathbf{R}}_{a_t u}$, steering vectors $\mathbf{a}_{a_t t}$, clutter covariances $\mathbf{C}_{a_t}$, QoS thresholds $\gamma_u$, $\Pmax$, ADMM penalty $\rho$, max iterations $N_{\max}$, and tolerances $\epsilon_\text{pri}$ and $\epsilon_\text{dual}$.

\STATE \textbf{Initialization:}
    \STATE \quad Execute Phase~I of \cordis-Split to obtain $\{\mathbf{W}_{a_t}^{(0)}\}$.
    \STATE \quad Initialize $\mathbf{z}_u^{(0)} = \sum_{a_t} \mathbf{l}_{a_t u}^{(0)}$ and  $\boldsymbol{\nu}_u^{(0)} = \mathbf{0}$ for all $u \in \mathcal{U}$.

    \STATE \quad CS broadcasts initial $\widetilde{\boldsymbol{\Sigma}}_u^{(0)}$ to all APs $a_t\in\setAt$.

\STATE \textbf{Phase I: Parallel Local AP Updates for all $a_t\in\setAt$}~\label{state:admm_start}
    \STATE \quad Reconstruct local residual: $\boldsymbol{\Sigma}_{a_t u}^{(n)} \leftarrow \widetilde{\boldsymbol{\Sigma}}_u^{(n)} - \mathbf{l}_{a_t u}^{(n)}$.
    \STATE \quad Solve the QCQP problem in \eqref{eq:admm-p-local}.
    \STATE \quad Compute $\mathbf{l}_{a_t u}^{(n+1)} \ \forall u\in\setU$ as in~\eqref{eq:admm-local-contribution} and send to CPU.
\STATE \textbf{Phase II: Centralized Global Updates at the CPU}
    \STATE \quad Solve problem~\eqref{eq:admm-p-central} to obtain $\mathbf{z}_u^{(n+1)}$ for all $u\in\setU$.
    \STATE \quad Update dual $\boldsymbol{\nu}_u^{(n+1)} \leftarrow \boldsymbol{\nu}_u^{(n)} + \sum_{a_t} \mathbf{l}_{a_t u}^{(n+1)} - \mathbf{z}_u^{(n+1)}$.
    \STATE \quad Compute $\widetilde{\boldsymbol{\Sigma}}_u^{(n+1)} \leftarrow \sum_{a_t} \mathbf{l}_{a_t u}^{(n+1)} - \mathbf{z}_u^{(n+1)} + \boldsymbol{\nu}_u^{(n+1)}$.
    \STATE \quad Broadcast $\widetilde{\boldsymbol{\Sigma}}_u^{(n+1)}$ to all transmit APs $a_t\in\setAt$.

\STATE \textbf{Check Convergence Using Primal Dual Residual:}
    \STATE \quad If [\eqref{eq:admm-primal-residual} and \eqref{eq:admm-dual-residual}] or [$n=N_{\max}$]: Exit.
    \STATE \quad Else: Set $n \leftarrow n+1$ and return to step \ref{state:admm_start}.

\end{algorithmic}
\end{algorithm}

 
\textit{Algorithmic structure (ALM/MM).}
\cordis-ADMM is a nested augmented-Lagrangian/majorization-minimization (ALM/MM)
scheme. The MM layer acts on the linear sensing
surrogate~\eqref{eq:admm-linear-objective} rather than the fractional SCNR. The
first-order expansion~\eqref{eq:admm-local-sensing-utility} minorizes the convex
echo power and is tight at $\mathbf{W}_{a_t}^{(n)}$, so maximizing the concave
surrogate monotonically improves the true utility, with the expansion point
refreshed each round. The ALM layer enforces the coupled SOC QoS
constraint~\eqref{eq:socp_constraint} via the augmented
Lagrangian~\eqref{eq:admm-augmented-Lagrangian} and the slack penalty
$\xi\varepsilon_u$, split into the per-AP QCQP~\eqref{eq:admm-p-local} and the
central consensus projection~\eqref{eq:admm-p-central} and solved by consensus
ADMM with dual ascent on $\boldsymbol{\nu}_u$. The two layers run in a single
outer loop that re-linearizes the surrogate and refreshes the consensus each
round, while the only inner loop is the per-user SOC
projection~\eqref{eq:admm-p-central}. Because $\mathbf{l}_{a_t u}$
in~\eqref{eq:admm-local-contribution} contains quantities of heterogeneous scale,
we normalize it so that the residuals
in~\eqref{eq:admm-primal-residual},~\eqref{eq:admm-dual-residual} are
dimensionless and a single threshold $\epsilon_\text{pri}=\epsilon_\text{dual}=1$
applies. Warm-starting from \cordis-Split 
keeps the round count $T_\text{ADMM}$ moderate.

 
\textit{On adaptive penalty selection.}
Residual balancing, which adapts $\rho$ to keep the primal and dual residuals
comparable~\cite{Boyd2011Book}, presumes an unconstrained quadratic consensus
update whose residual ratio varies monotonically with $\rho$. Here the central
update~\eqref{eq:admm-p-central} is instead a constrained per-user SOC projection
whose active set, and hence $\rho$-sensitivity, changes nonlinearly across
rounds, so adapting $\rho$ provides no gain in efficiencey 
Since the per-round contribution vector is exchanged every iteration,
the cumulative fronthaul scales with $T_\text{ADMM}$. Thus, we fix $\rho$ and
control $T_\text{ADMM}$ via a \cordis-Split warm start. By systematically decoupling the constraints, 
\cordis-ADMM reduces the network scaling from $\mathcal{O}(|\mathcal{A}_t|^3)$ to $\mathcal{O}(1)$, 
enabling scalable robust precoding in dense cell-free deployments.

\section{Simulation Results}
\label{sec:results}

In this section, we numerically evaluate the proposed \cordis\ framework and
quantify the communication-sensing trade-offs it exposes. 
We consider a distributed cell-free ISAC network in which $\Nap$ transmit
APs are uniformly deployed on a circle of radius $650$\,m, while $\Nue$
single-antenna users and $\Nt$ targets are placed uniformly at
random within a concentric disc of radius $1$\,km. For each target, the
closest AP is designated as the sensing receive AP
$a_r\in\setAr$. Each AP is equipped with a uniform circular array of
$\Mt=\Mr=M$ half-wavelength spaced antennas operating at a carrier frequency of
$3.5$\,GHz. The communication links follow the spatially correlated Rician
fading model of Section~\ref{sec:system-model}, instantiated as a 3GPP
urban-microcell (UMi) street-canyon channel using the path-loss, LoS 
probability, shadow-fading, and angular-spread parameters of
3GPP TR~38.901~\cite{3GPP_TR_22837_V1940_2024} (mean LoS Rician $K$-factor $9$\,dB; shadow
fading $4$\,dB and $7.82$\,dB for LoS and NLoS, respectively), and the channel
estimates $\widehat{\mathbf{h}}_{a_t u}$ are obtained with the linear MMSE
estimator of Proposition~\ref{prop:channel-estimation}. For a system bandwidth
$B=20$\,MHz with a receiver noise figure of $7$\,dB at $T_0=290$\,K, the thermal
noise floor is $\sigma_n^2=k_B T_0 B\,\mathrm{NF}\approx4.0\times10^{-13}$\,W
($-94$\,dBm), and thus the per-AP transmit-power budget is
$\Pmax\approx20$\,W ($43$\,dBm, i.e., $\Pmax/\sigma_n^2=137$\,dB) and the nominal
uplink pilot power is $P_p\approx2.5$\,W ($34$\,dBm, i.e.,
$P_p/\sigma_n^2=128$\,dB). Unless stated otherwise, the parameters in
Table~\ref{tab:sim_params} are used, and every reported curve is averaged over
$N_{\mathrm{trials}}$ independent Monte Carlo channel and placement
realizations, where $N_{\mathrm{trials}} = 5000$ for CDF plots and $N_{\mathrm{trials}} = 2000$ for others.

\begin{table}[t]
\centering
\caption{Default Simulation Parameters}
\label{tab:sim_params}
\renewcommand{\arraystretch}{1.2} 
\setlength{\tabcolsep}{4pt} 
\scriptsize
\begin{tabularx}{\columnwidth}{@{} 
    >{\raggedright\arraybackslash}X 
    >{\raggedleft\arraybackslash}p{0.15\columnwidth} | 
    >{\raggedright\arraybackslash}X 
    >{\raggedleft\arraybackslash}p{0.12\columnwidth} @{}}
\toprule
\textbf{Parameter (Symbol)} & \textbf{Value} & \textbf{Parameter (Symbol)} & \textbf{Value} \\
\midrule
Carrier frequency ($f_c$)           & $3.5$\,GHz        & Nominal pilot power ($P_p$)       & $2.5$\,W \\
System bandwidth ($B$)              & $20$\,MHz         & Per-user SINR floor ($\gamma$)    & $5$\,dB \\
Transmit APs ($\Nap$)               & $10$              & Clutter penalty ($\kappa$)        & $0.08$ \\
Antennas per AP ($M$)               & $16$              & Target RCS var.\ ($\sigma_{\mathrm{RCS}}^2$) & $0.5$ \\
Users / targets ($\Nue$ / $\Nt$)    & $4$ / $2$         & Slow-time snapshots ($T$)         & $20$ \\
Radius (AP/UE-tgt)                  & $0.65$ / $1$\,km  & ADMM penalty ($\rho$)             & $1.0$ \\
Per-AP power ($\Pmax$)              & $20$\,W           & ADMM tol.\ ($\epsilon_{\mathrm{pri/dual}}$) & $1.0$ \\
Noise figure ($\mathrm{NF}$)        & $7$\,dB           & Noise temp. ($T_0$)               & $290$\,K  \\
Thermal noise floor ($\sigma_n^2$)  & $-94$\,dBm        & MC trials ($N_{\mathrm{trials}}$) & $\geq 2000$ \\
\bottomrule
\end{tabularx}
\vspace{-3mm} 
\end{table}

\textit{Baselines.} We benchmark 
\cordis-Split (Alg.~\ref{alg:cordis-split}) and \cordis-ADMM
(Alg.~\ref{alg:cordis-admm}) against the following schemes:
\begin{itemize}
\item \textbf{Centralized}: a one-shot solution of the global problem
\eqref{eq:admm-p-global} computed at the CPU with aggregated network CSI and no
consensus splitting. It serves as the performance \emph{upper bound} 
that the distributed algorithms approach.
\item \textbf{LR-MMSE} ($\rho{=}0.5$): the local robust-MMSE precoder of
Section~\ref{subsec:split-distributed-bf} applied with a \emph{fixed}, equal
PSR $\rho_{a_t}{=}0.5$ and no centralized power allocation.
This isolates the value added by the PSR/PA stage of \cordis-Split.
\item \textbf{Global ZF} and \textbf{Global MRT}: network-wide zero-forcing and
maximum-ratio transmit precoders that use the aggregated channel estimates with
uniform power loading and a steering-only (interference-na\"ive) sensing beam.
\end{itemize}

\textit{Metrics.} Communication quality is measured by the per-user SINR
in~\eqref{eq:sinr} and sensing quality by the per-target average post-STAP SCNR
in~\eqref{eq:expedted-scnr}. Because the design enforces a \emph{per-user} QoS
floor, we report the worst-case minimum per-user SINR and the minimum per-target SCNR 
(the mean-based counterparts exhibit the same ordering and trends).
We additionally report the per-user \emph{outage
probability} $P_{\mathrm{out}}(\gamma)=\Pr\{\text{SINR}_u<\gamma\}$ and its
complementary \emph{coverage} $1-P_{\mathrm{out}}$, and, for the
communication-sensing region, the priority-weighted sum SCNR
$\sum_{a_r}\sum_t\omega_t\,\overline{\text{SCNR}}_{a_r,t}$, which is 
the \cordis-ADMM objective in~\eqref{eq:admm-p-global-obj}.

\subsection{Convergence of \cordis-ADMM}
\label{subsec:results-convergence}

Fig.~\ref{fig:convergence} examines the numerical behavior of \cordis-ADMM.
The primal and dual consensus residuals in~\eqref{eq:admm-primal-residual}
and~\eqref{eq:admm-dual-residual} decay monotonically and cross the convergence
tolerance $\epsilon_{\mathrm{pri}}=\epsilon_{\mathrm{dual}}=1$ 
within roughly $150$ iterations for the
displayed realization, confirming that the consensus
splitting of Section~\ref{subsec:admm-consensus-admm} drives the local AP
contributions to network-wide agreement.
We also evaluated the standard
residual-balancing \emph{adaptive penalty} update for $\rho$~\cite{Boyd2011Book};
in our setting it does \emph{not} accelerate convergence. Because the per-user
SOC consensus projection~\eqref{eq:admm-p-central} acts as an inner loop whose
effect on the residuals is not captured by the residual-balancing heuristic,
the adaptive-$\rho$ scheme repeatedly readjusts the penalty without the usual
speed-up. as Fig.~\ref{fig:convergence} shows, its residuals still decrease but
reach the tolerance only at around $250$ iterations. We therefore retain a fixed
penalty $\rho$ in all subsequent experiments.
The center panel tracks the worst-case QoS slack
$\max_u\varepsilon_u$ of the central projection~\eqref{eq:admm-p-central}, which
decreases by several orders of magnitude. Since $\varepsilon_u$
is the only mechanism by which the per-user SINR constraint can be violated, its
decay certifies that the QoS guarantee~\eqref{eq:admm-p-global-sinr} is met at
convergence. The right panel makes this concrete; as the slack shrinks, the
realized worst-user SINR is pulled onto the floor $\gamma=5$\,dB, i.e.,
\cordis-ADMM spends exactly the communication power needed to satisfy the QoS
constraint and reinvests the remainder into sensing. 

\begin{figure}[!t]
\centering
\includegraphics[width=\columnwidth]{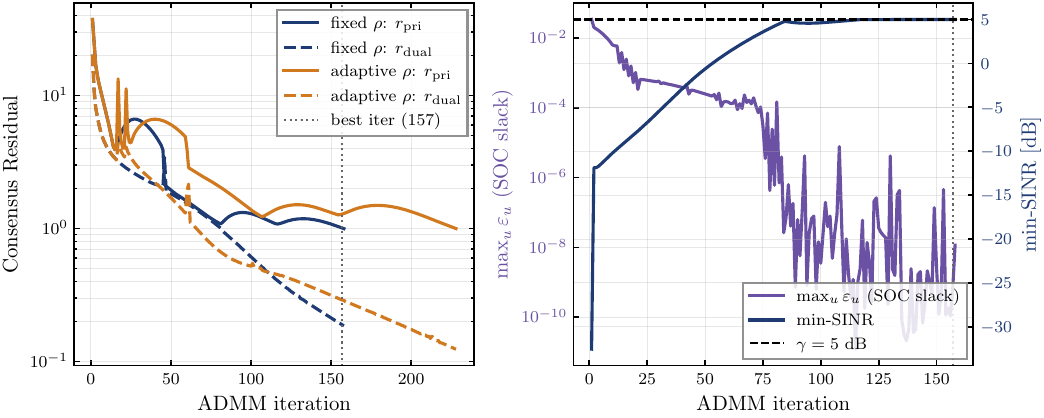}
\caption{Convergence of \cordis-ADMM. (Left) primal and dual consensus
residuals versus ADMM iteration for fixed and adaptive penalty $\rho$; 
(Middle) Maximum per-user SOC slack
$\max_u\varepsilon_u$ (Right) SOC slack vs. realized worst-user SINR.}
\label{fig:convergence}
\end{figure}

\subsection{Communication-Sensing Trade-off and Operating Point}
\label{subsec:results-tradeoff}

\begin{figure}[!t]
\centering
\includegraphics[width=0.9\columnwidth]{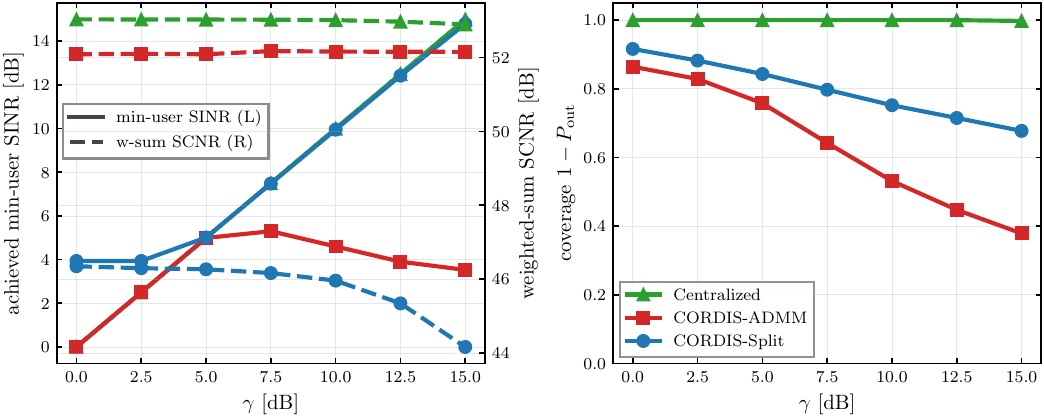}
\caption{Communication-sensing trade-off versus SINR target
$\gamma$. (Left) Achieved worst-case SINR (left axis, solid) and priority-weighted
sum SCNR (right axis, dashed); 
(Right) Communication coverage $1-P_{\mathrm{out}}$ versus $\gamma$.}
\label{fig:cs_tradeoff}
\end{figure}

Fig.~\ref{fig:cs_tradeoff} sweeps the SINR target
$\gamma\in[0,15]$\,dB and exposes the communication-sensing trade-off. 
In the left panel, the Centralized and \cordis-Split approaches 
raise the worst-user SINR in lock-step with the prescribed
floor, while \cordis-ADMM holds the worst-case user close to
$\gamma$ and declines to over-serve it once $\gamma\gtrsim7.5$\,dB. The payoff
is visible on the sensing axis: \cordis-ADMM sustains a weighted-sum SCNR of
$\approx\!52$\,dB, within about $1$\,dB of the centralized ceiling and roughly
$6$\,dB above \cordis-Split, essentially flat in $\gamma$ because its joint
formulation reallocates every available spatial and power degree of freedom toward
the targets. \cordis-Split, constrained by its fixed local beamformers and
heuristic power split, cannot perform this reallocation and its sum SCNR even
declines as more power is diverted to communication.
The right panel shows the complementary cost: communication coverage decreases with
$\gamma$ for both distributed schemes, and the conservative \cordis-Split
retains a higher coverage (e.g., $\approx\!0.84$ versus $\approx\!0.76$ at
$\gamma{=}5$\,dB) precisely because it does not push users onto the floor. The
Centralized approach, with global CSI, maintains unit coverage throughout. These
curves identify $\gamma^\star=5$\,dB as a
balanced operating point, which we adopt for the remaining experiments.

\subsection{Operating-Point Performance Distributions}
\label{subsec:results-cdf}

\begin{figure}[!t]
\centering
\includegraphics[width=\columnwidth]{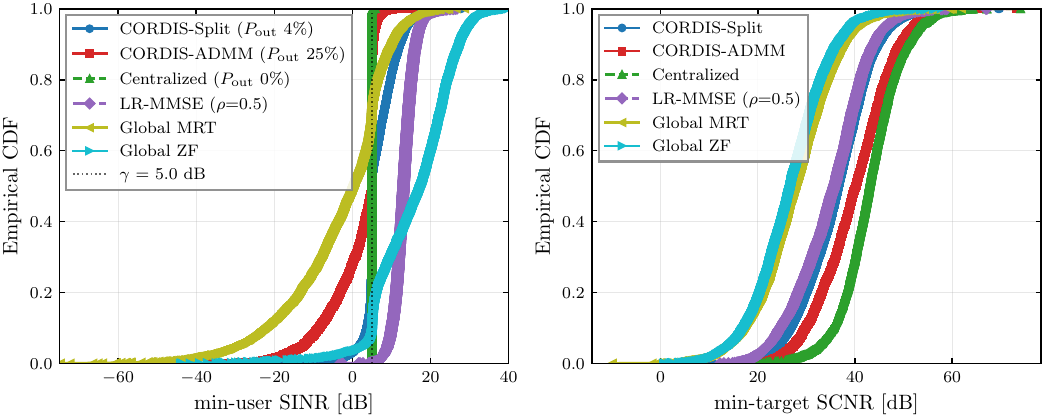}
\caption{Empirical CDFs at the operating point $\gamma^\star=5$\,dB. (Left)
Worst-case SINR, incl.~per-user outage $P_{\mathrm{out}}$; (Right)
Worst-case SCNR.}
\label{fig:cdf}
\end{figure}

Fig.~\ref{fig:cdf} shows empirical CDFs of the worst-case SINR and
SCNR for $\gamma^\star=5$\,dB. On the communication side (left), the
interference-na\"ive and zero-forcing baselines (Global MRT/ZF) and fixed-split
LR-MMSE achieve the highest SINR, since they devote essentially all
resources to communication; \cordis-Split inherits a comfortable margin above
the floor ($P_{\mathrm{out}}\!\approx\!4\%$), whereas \cordis-ADMM drives the
worst-case SINR to $\gamma$ in order to free spatial and
power resources for sensing.
Because this floor is enforced on the estimated channels
$\widehat{\mathbf{h}}_{a_t u}$, the residual CSI error then spreads the realized worst-case SINR around $\gamma$, 
pushing a larger fraction of
users below it ($P_{\mathrm{out}}\!\approx\!25\%$).
The sensing panel (right) reveals what this buys:
\cordis-ADMM's SCNR distribution is statistically indistinguishable from the
centralized one and lies several dB above \cordis-Split, while the
communication-optimal baselines lose roughly $10$\,dB in SCNR. 
The methods that work best on the left are the worst on the right, and
\cordis-ADMM is the only \emph{distributed} approach that matches centralized
sensing performance. \cordis-Split occupies the opposite corner: robust
communication and modest sensing, making the two proposed algorithms complementary
operating modes rather than strict substitutes.

\subsection{Clutter Suppression}
\label{subsec:results-clutter}

\begin{figure}[!t]
\centering
\includegraphics[width=0.8\columnwidth]{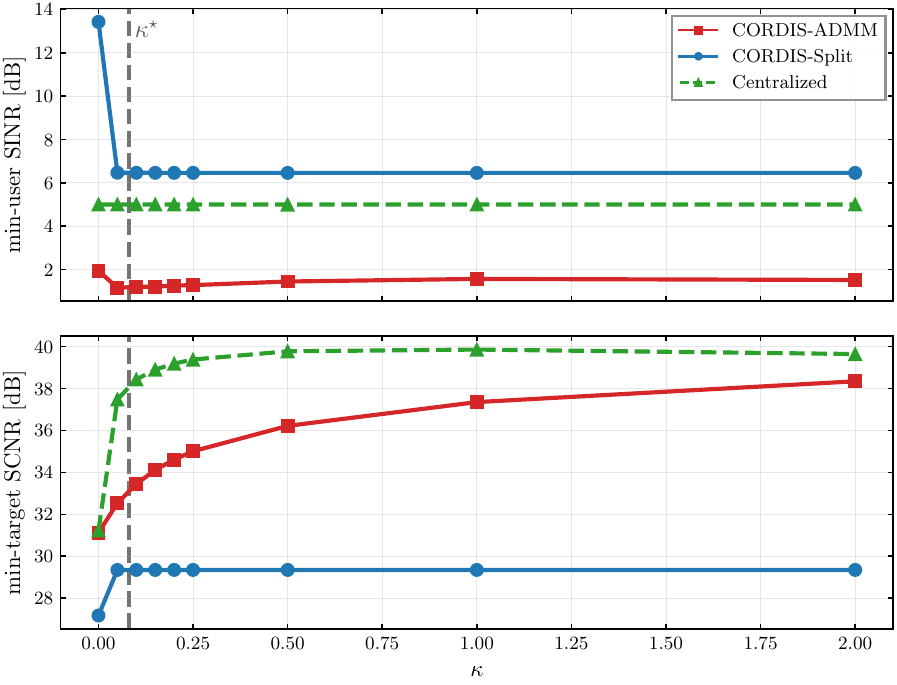}
\caption{Effect of the clutter penalty $\kappa$. (Top) Worst-case SINR;
(Bottom) Worst-case SCNR.}
\label{fig:clutter}
\end{figure}

Fig.~\ref{fig:clutter} isolates the role of the clutter-regularization
parameter $\kappa$ in~\eqref{eq:admm-linear-objective}. As $\kappa$
increases, the penalty
$\kappa\,\tr(\mathbf{W}_{a_t}^H\mathbf{C}_{a_t}\mathbf{W}_{a_t})$ progressively
suppresses energy radiated into the statistical clutter subspace, and
the worst-case SCNR of the joint schemes climbs monotonically before
saturating once the clutter is effectively nulled. \cordis-ADMM improves by
several dB and tracks the centralized profile, while the Centralized bound
itself shows the same diminishing-returns shape. The top panel
confirms that this sensing gain is almost free in terms of communication terms. The
worst-case SINR is insensitive to $\kappa$ since the penalty acts
on the sensing-dominated spatial directions. \cordis-Split is flat in $\kappa$
because its fixed null-space-projected sensing beam cannot react to a CPU-level
penalty, a limitation that motivates the joint design. The value $\kappa^\star=0.08$ 
lies at the knee of the SCNR curve, capturing the bulk
of the suppression benefit with minimal impact on communication. We note that
this behavior requires the angular support of the clutter to be (at least partially)
separable from the target directions; when the two coincide, penalizing clutter
also attenuates the target return and the gain vanishes.

\subsection{Robustness to Imperfect CSI}
\label{subsec:results-csi}

\begin{figure}[!t]
\centering
\includegraphics[width=0.8\columnwidth]{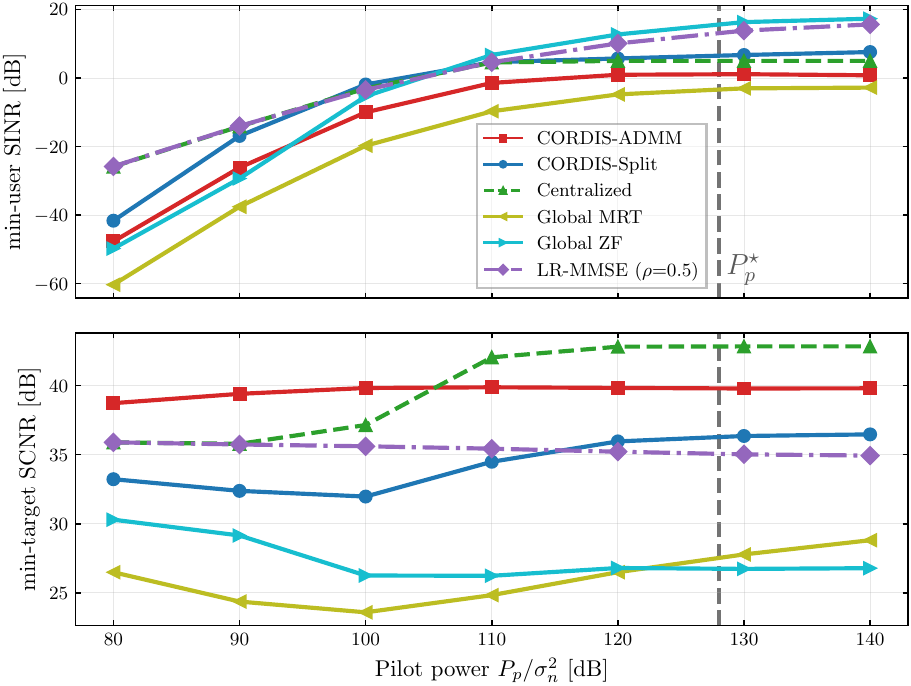}
\caption{Robustness to imperfect CSI vs. the uplink pilot power
$P_p/\sigma_n^2$. (Top) Worst-case SINR; (Bottom) Worst-case SCNR.}
\label{fig:csi}
\end{figure}

Fig.~\ref{fig:csi} sweeps the uplink pilot power $P_p/\sigma_n^2$, which
controls the MMSE channel-estimation quality. 
On the sensing side
(bottom), \cordis-ADMM is strikingly \emph{robust}: its worst-case SCNR remains
within about $1$\,dB of its best value, even when the
estimates are poor, because the STAP sensing metric in~\eqref{eq:expedted-scnr}
depends on the clutter-plus-noise covariance and the target
steering geometry rather than instantaneous UE channel estimates. The
Centralized approach, by contrast, must estimate its way to a comparable SCNR
and only overtakes \cordis-ADMM once the CSI is nearly perfect, while the
communication-oriented baselines perform relatively poorly. For
communication (top), all methods degrade with decreasing pilot power, but the
robust-MMSE-based designs (\cordis-Split, LR-MMSE) and \cordis-ADMM
degrade more gracefully, owing to the explicit incorporation
of the estimation-error covariance $\widetilde{\mathbf{R}}_{a_t u}$ into the
precoder and the SINR model. At $P_p^\star$ (vertical dashed line), the network achieves 
a high-quality semsing-communication balance. 

\subsection{Scalability in the Locally Low-Rank Regime}
\label{subsec:results-lowrank}

Fig.~\ref{fig:lowrank} assumes AP arrays of only 
$M=3$ antennas, with the total number of transmit antennas held
constant by increasing $\Nap$. Once $\Nue>M$, the local channel matrices
become rank-deficient and no AP can null all MUI, which is
the locally low-rank regime that motivated \cordis-ADMM. In the shaded region, 
the consequences for approaches with
fixed local beamformers are severe: the per-user outage of \cordis-Split
climbs past $0.5$ and that of Global MRT toward $0.8$, while their worst-case
SCNR simultaneously erodes. \cordis-ADMM behaves entirely differently. By
coordinating beamformers across APs via the consensus mechanism, the
outage remains low ($\approx\!0.27$ at the heaviest load), and the SCNR remains
within about $1$\,dB of the centralized
bound. This demonstrates the \cordis framework's central claim:
when local channels are well-conditioned, the lightweight \cordis-Split suffices
and even excels in communication robustness, but as the local channels become
rank-deficient -- arguably the operating point of practical dense
cell-free deployments -- only the joint consensus optimization of \cordis-ADMM
preserves both communication and sensing QoS, at a fronthaul cost that is
independent of the number of APs and antenna array size.

\begin{figure}[!t]
\centering
\includegraphics[width=0.8\columnwidth]{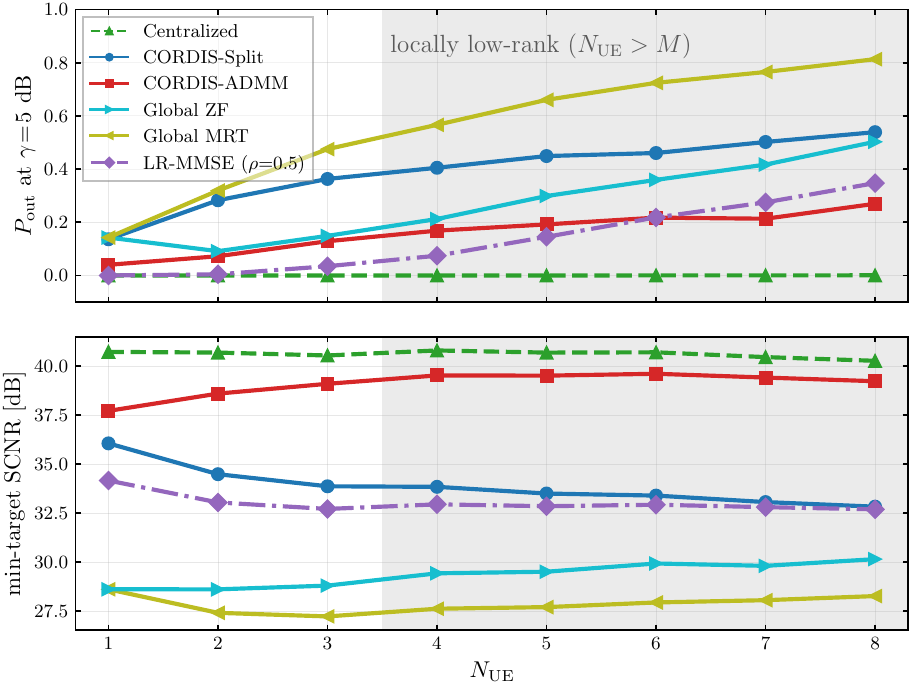}
\caption{Scalability as the user load $\Nue$ grows for a fixed per-AP array of
$M$ antennas. (Top) Per-user outage at $\gamma=5$\,dB; (Bottom) Worst-case
SCNR.}
\label{fig:lowrank}
\end{figure}

\subsection{Fronthaul and Computational Overhead}
\label{subsec:results-overhead}

\begin{table*}[t]
\centering
\caption{Per-AP fronthaul coordination overhead. CORDIS-ADMM runs $T_{\rm ADMM}$ rounds per solve ($T_{\rm ADMM}\approx92.0$ here).}
\label{tab:fronthaul}
\begin{tabular}{lccccc}
\toprule
Algorithm & Data shared & Size & Real/round & Iter. & Scalable \\
\midrule
Centralized & $\widehat{\mathbf{H}}_{a_t},\ \mathbf{W}_{a_t}$ & $\mathbb{C}^{M_t \times \Nue} + \mathbb{C}^{M_t \times |\mathcal{D}|}$ & 320 & 1 & $\times$ \\
CORDIS-Split & $\{\widehat{\beta}_{a_t u}, \widetilde{g}_{a_t u}, e_{a_t u}^{(s)}\}_{u \in \mathcal{U}},\ z_{a_t}, \widetilde{q}_{a_t},\ \rho^\ast_{a_t}$ & $(3 \Nue + 3) \times \mathbb{R}$ & 15 & 1 & $\checkmark$ \\
CORDIS-ADMM & $\mathbf{l}_{a_t u}\ (\mathrm{up}),\ \widetilde{\boldsymbol{\Sigma}}_u\ (\mathrm{dl})$ & $2 \times \Nue \times \mathbb{C}^{|\mathcal{D}|+1}$ per outer iter & 104 & 92 & $\checkmark$ \\
\bottomrule
\end{tabular}
\vspace{-1mm}
\end{table*}

\begin{figure}[!t]
\centering
\includegraphics[width=0.6\columnwidth]{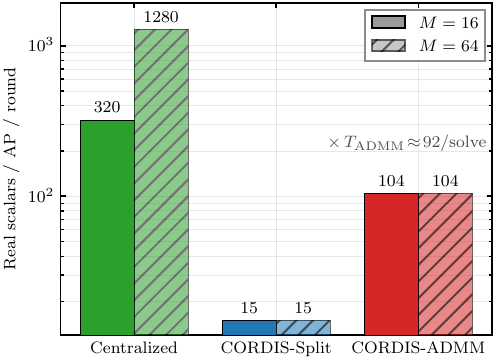}
\caption{Per-AP fronthaul exchange (real scalars per AP per round, log scale)
for two array sizes $M\in\{16,64\}$.}
\label{fig:fronthaul}
\end{figure}

Table~\ref{tab:fronthaul} summarizes the
fronthaul load per round, while Fig.~\ref{fig:fronthaul} quantifies the load 
for two representative array sizes. The centralized approach must
collect the $M$-dimensional channel estimates and precoders at the CPU, so
its per-AP payload grows linearly with $M$, from $320$ real
scalars at $M{=}16$ to $1280$ at $M{=}64$, and, once aggregated, also with the
number of APs $\Nap$. Both \cordis\ schemes, in contrast, exchange only
low-dimensional, $M$-independent summaries: \cordis-Split forwards 
three compressed scalars per user ($15$ scalars in a single round),
while \cordis-ADMM exchanges the contribution vector $\mathbf{l}_{a_t u}$ 
($104$ scalars per round), repeated over
$T_{\mathrm{ADMM}}\!\approx\!92$ consensus rounds in our experiments.
\cordis-ADMM therefore trades a larger cumulative exchange for two
properties the centralized scheme lacks and that are precisely what
matters at scale: every message is independent of the array size $M$, and,
because the heavy spatial optimization remains local, the per-AP computation is
independent of $\Nap$. This scaling is also mirrored in the computational cost. 
Taken together with Figs.~\ref{fig:cdf}--\ref{fig:lowrank}, these
results show that \cordis-ADMM recovers near-centralized communication and
sensing performance, including in the rank-deficient regime where the
lightweight \cordis-Split degrades, while keeping both the fronthaul payload
and the per-AP computation independent of the antenna count and the network
size, the two dimensions along which dense cell-free ISAC deployments actually
grow.


\section{Conclusion}
\label{sec:conclusion}

We have introduced \cordis, a coordinated resource-allocation framework for
distributed cell-free ISAC that makes the communication-sensing trade-off
explicit while respecting fronthaul and local computation limits. Within this framework we
developed two complementary algorithms: \cordis-Split, which pairs fixed local
robust beamformers with a centralized convex power allocation and exchanges only
a few scalars per user, and \cordis-ADMM, which jointly optimizes beamforming and
power allocation through a consensus ADMM procedure whose per-round exchange and
per-AP computation are independent of the array size and number of APs. Over
a 3GPP UMi correlated-Rician channel with clutter and imperfect CSI,
\cordis-ADMM tracked the centralized sensing bound to within about $1$\,dB and
degraded gracefully under CSI estimation error, and in the locally low-rank regime
$\Nue>\Mt$, where the lightweight \cordis-Split saturates, it preserved both the
per-user QoS and the target SCNR. 

\bibliographystyle{IEEEtran}
\bibliography{ref}

@IEEEtranBSTCTL{BSTcontrol,
  CTLuse_forced_etal       = "yes",
  CTLmax_names_forced_etal = "7",
  CTLnames_show_etal       = "1"
}

@ARTICLE{Liu2022dualFunctionalISAC,
  author={Liu, Fan and Cui, Yuanhao and Masouros, Christos and Xu, Jie and Han, Tony Xiao and Eldar, Yonina C. and Buzzi, Stefano},
  journal={IEEE J. Sel. Areas in Commun.}, 
  title={{Integrated Sensing and Communications: Toward Dual-Functional Wireless Networks for 6G and Beyond}}, 
  year={2022},
  volume={40},
  number={6},
  pages={1728-1767},
  doi={10.1109/JSAC.2022.3156632}
}

@INPROCEEDINGS{Wymeersch2021ISAC6G,
  author={Wymeersch, Henk and Shrestha, Deep and de Lima, Carlos Morais and Yajnanarayana, Vijaya and Richerzhagen, Bj{\"o}rnson and Keskin, Musa Furkan and Schindhelm, Kim and Ramirez, Alejandro and Wolfgang, Andreas and de Guzman, Mar Francis and Haneda, Katsuyuki and Svensson, Tommy and Baldemair, Robert and Parkvall, Stefan},
  booktitle={Proc. IEEE Int'l Symp. on Personal, Indoor and Mobile Radio Commun. (PIMRC)}, 
  title={{Integration of Communication and Sensing in 6G: a Joint Industrial and Academic Perspective}}, 
  year={2021},
  volume={},
  number={},
  doi={10.1109/PIMRC50174.2021.9569364}
}

@ARTICLE{Liu2026multiDomainOpt,
  author={Liu, Rang and Li, Ming and Zafari, Mehdi and Ottersten, Björn and Swindlehurst, A.},
  journal={IEEE Wireless Commun. (Early Access)}, 
  title={{Multi-Domain Optimization Framework for ISAC: From Electromagnetic Shaping to Network Cooperation}}, 
  year={2026},
  volume={},
  number={},
  doi={10.1109/MWC.2025.3646980}
}

@ARTICLE{Ngo2017cellfreeVsSmallcell,
  author={Ngo, Hien Quoc and Ashikhmin, Alexei and Yang, Hong and Larsson, Erik G. and Marzetta, Thomas L.},
  journal={IEEE Trans. Wireless Commun.}, 
  title={{Cell-Free Massive MIMO Versus Small Cells}}, 
  year={2017},
  volume={16},
  number={3},
  pages={1834-1850},
  doi={10.1109/TWC.2017.2655515}
}

@article{Interdonato2019ubiquitousCF,
  author  = {Giovanni Interdonato and Emil Bj{\"o}rnson and Hien Quoc Ngo and P{\aa}l Frenger and Erik G. Larsson},
  title   = {{Ubiquitous Cell-Free Massive {MIMO} Communications}},
  journal = {EURASIP J. Wireless Commun. and Networking},
  year    = {2019},
  volume  = {2019},
  number  = {1},
  pages   = {197},
  doi     = {10.1186/s13638-019-1507-0}
}

@article{Demir2021foundationsCF,
    author = {{\"O}zlem Tu{\u{g}}fe Demir and Emil Bj{\"o}rnson and Luca Sanguinetti},
    title = {{Foundations of User-Centric Cell-Free Massive MIMO}},
    journal = {Foundations and Trends in Sig. Proc.},
    volume = {14},
    number = {3-4},
    pages = {162-472},
    year = {2021},
    month = {01},
    issn = {1932-8346},
    doi = {10.1561/2000000109},
    eprint = {https://www.emerald.com/ftsig/article-pdf/14/3-4/162/11046948/2000000109en.pdf},
}

@ARTICLE{Femenias2025scalableISAC_HWI,
  author={Femenias, Guillem and Riera-Palou, Felip},
  journal={IEEE Open J. Commun. Soc.}, 
  title={{ISAC in Scalable Cell-Free Massive MIMO Networks With Hardware Impairments}}, 
  year={2025},
  volume={6},
  number={},
  pages={8793-8815},
  doi={10.1109/OJCOMS.2025.3621773}
}

@inproceedings{Zafari2025confAsilomar,
      author={Mehdi Zafari and Rang Liu and A. Swindlehurst},
  booktitle={Proc. 59th Asilomar Conf. on Sig., Sys., and Comp.}, 
  title={{Coordinated Decentralized Resource Optimization for Cell-Free ISAC Systems}}, 
  year={2025},
  volume={},
  number={},
  pages={912-917}
}

@ARTICLE{Mao2024CSregion,
  author={Mao, Weihao and others},
  journal={IEEE Trans. Wireless Commun.}, 
  title={{Communication-Sensing Region for Cell-Free Massive MIMO ISAC Systems}}, 
  year={2024},
  volume={23},
  number={9},
  pages={12396-12411},
  doi={10.1109/TWC.2024.3392330}
}

@techreport{ITU_R_M2160_2023,
  author       = "{ITU-R}",
  title        = "{Framework and overall objectives of the future development of IMT for 2030 and beyond}",
  institution  = "{International Telecommunication Union, Radiocommunication Sector}",
  type         = "{Recommendation}",
  number       = "{ITU-R M.2160-0}",
  address      = "{Geneva, Switzerland}",
  month        = nov,
  year         = {2023},
}

@techreport{ETSI_GR_ISC_001_V111_2025,
  author       = {{European Telecommunications Standards Institute (ETSI)}},
  title        = {{Integrated Sensing And Communications (ISAC); Use Cases and Deployment Scenarios}},
  institution  = {{ETSI}},
  type         = {{Group Report}},
  number       = {{ETSI GR ISC 001}},
  version      = {{V1.1.1}},
  address      = {{Sophia Antipolis Cedex, France}},
  month        = mar,
  year         = {2025},
}

@techreport{3GPP_TR_22837_V1940_2024,
  author       = {{3rd Generation Partnership Project (3GPP)}},
  title        = {{Study on Integrated Sensing and Communication}},
  institution  = {{3GPP}},
  type         = {{Technical Report}},
  number       = {{TR 22.837}},
  version      = {{V19.4.0}},
  month        = jun,
  year         = {2024},
}

@ARTICLE{Elfiatoure2025multiTargetCF,
  author={Elfiatoure, Mohamed and Mohammadi, Mohammadali and Quoc Ngo, Hien and Shin, Hyundong and Matthaiou, Michail},
  journal={IEEE Trans. Wireless Commun.}, 
  title={{Multiple-Target Detection in Cell-Free Massive MIMO-Assisted ISAC}}, 
  year={2025},
  volume={24},
  number={5},
  pages={4283-4298},
  doi={10.1109/TWC.2025.3542206}
}

@ARTICLE{Meng2025CoopIsacNet,
  author={Meng, Kaitao and Masouros, Christos and Petropulu, Athina P. and Hanzo, Lajos},
  journal={IEEE Trans. Wireless Commun.}, 
  title={{Cooperative ISAC Networks: Performance Analysis, Scaling Laws, and Optimization}}, 
  year={2025},
  volume={24},
  number={2},
  pages={877-892},
  doi={10.1109/TWC.2024.3491356}
}

@ARTICLE{Behdad2024multistatic,
  author={Behdad, Zinat and Demir, {\"O}zlem Tu{\u g}fe and Sung, Ki Won and Bj{\"o}rnson, Emil and Cavdar, Cicek},
  journal={IEEE Trans. Wireless Commun.}, 
  title={{Multi-Static Target Detection and Power Allocation for Integrated Sensing and Communication in Cell-Free Massive MIMO}}, 
  year={2024},
  volume={23},
  number={9},
  pages={11580-11596},
  doi={10.1109/TWC.2024.3383209}
}

@ARTICLE{Huang2022CoordPowerControl,
  author={Huang, Yi and Fang, Yuan and Li, Xinmin and Xu, Jie},
  journal={IEEE Trans. Vehic. Tech.}, 
  title={{Coordinated Power Control for Network Integrated Sensing and Communication}}, 
  year={2022},
  volume={71},
  number={12},
  pages={13361-13365},
  doi={10.1109/TVT.2022.3194139}
}

@ARTICLE{Femenias2025ScalableCFmMIMO,
  author={Femenias, Guillem and Riera-Palou, Felip},
  journal={IEEE Trans. Vehic. Tech.},
  title={{Scalable Cell-Free Massive MIMO-Based Integrated Sensing and Communication}}, 
  year={2025},
  volume={74},
  number={10},
  pages={15643-15659},
  doi={10.1109/TVT.2025.3566710}
  }

@ARTICLE{Demirhan2025CellFreeISAC,
  author={Demirhan, Umut and Alkhateeb, Ahmed},
  journal={IEEE Trans. Commun.}, 
  title={{Cell-Free ISAC MIMO Systems: Joint Sensing and Communication Beamforming}}, 
  year={2025},
  volume={73},
  number={6},
  pages={4454-4468},
  doi={10.1109/TCOMM.2024.3490740}
  }

@ARTICLE{Bjornson2020ScalableCFmMIMO,
  author={Björnson, Emil and Sanguinetti, Luca},
  journal={IEEE Trans. Commun.}, 
  title={{Scalable Cell-Free Massive MIMO Systems}}, 
  year={2020},
  volume={68},
  number={7},
  pages={4247-4261},
  doi={10.1109/TCOMM.2020.2987311}
  }

@ARTICLE{Zou2024DistVsCent,
  author={Zou, Qinglin and Behdad, Zinat and Tu{\u g}fe Demir, {\"O}zlem and Cavdar, Cicek},
  journal={IEEE Wireless Commun. Letters}, 
  title={{Distributed Versus Centralized Sensing in Cell-Free Massive MIMO}}, 
  year={2024},
  volume={13},
  number={12},
  pages={3345-3349},
  doi={10.1109/LWC.2024.3462710}
  }

@book{Richards2010BookRadar,
    author = {Mark A. Richards  and James A. Scheer  and William A. Holm },
    title = {{Principles of Modern Radar: Basic Principles}},
    publisher = {The Institution of Engineering and Technology},
    year = {2010},
    doi = {10.1049/SBRA021E}
}

@inproceedings{Zafari2026ASSENT,
    title={{ASSENT: Learning-Based Association Optimization for Distributed Cell-Free ISAC}}, 
    author={Mehdi Zafari and A. Swindlehurst},
    booktitle={Proc. IEEE Int'l Conf. on Commun. (ICC)}, 
    year={2026}
}

@ARTICLE{Memisoglu2024Scheduling,
  author={Memisoglu, Ebubekir and Janjua, Muhammad Bilal and Arslan, Hüseyin},
  journal={IEEE Wireless Commun. Letters}, 
  title={{Power-Efficient Time-Domain Scheduling for ISAC Beamforming}}, 
  year={2024},
  volume={13},
  number={10},
  pages={2837-2841},
  doi={10.1109/LWC.2024.3448528}
  }

@INPROCEEDINGS{Zafari2024ADMM,
    author={Zafari, Mehdi and Pandey, Divyanshu and Doost-Mohammady, Rahman and Uribe, César A.},
    booktitle = {Proc. 58th Asilomar Conf. Sig., Syst. and Comput.},
    title={{ADMM for Downlink Beamforming in Cell-Free Massive MIMO Systems}}, 
    year={2024},
    volume={},
    number={},
    pages={623-628},
    doi={10.1109/IEEECONF60004.2024.10943106}
}

@article{Boyd2011Book,
    year = {2011},
    volume = {3},
    journal = {Foundations and Trends® in Machine Learning},
    title = {{Distributed Optimization and Statistical Learning via the Alternating Direction Method of Multipliers}},
    doi = {10.1561/2200000016},
    issn = {1935-8237},
    number = {1},
    pages = {1-122},
    author = {Stephen Boyd and Neal Parikh and Eric Chu and Borja Peleato and Jonathan Eckstein}
}

\end{document}